\documentclass[
    reprint,
    amsmath, amssymb,
    aps,
    floatfix,
    nofootinbib
]{revtex4-1}

\usepackage{graphicx}
\usepackage{dcolumn}
\usepackage{bm}
\usepackage{enumerate}
\usepackage{tcolorbox}
\usepackage{tikz}
\usepackage{quantikz}
\usetikzlibrary{positioning,mindmap,trees} 
\tcbuselibrary{theorems}
\usepackage{threeparttable}
\usepackage{natbib}
\usepackage{multirow}
\usepackage{dsfont}
\usepackage{color}
\usepackage{hyperref}
\usepackage[table]{xcolor}
\usepackage[utf8]{inputenc}
\usepackage{hyperref}

\usepackage[margin=1in]{geometry}
\usepackage{amsmath,amsthm,amssymb,booktabs}
\usepackage{graphicx}
\usepackage{hyperref}
\usepackage{natbib}
\usepackage{bm}
\usepackage{mathrsfs}
\usepackage{braket}

\usepackage{tabularx}  
\usepackage{booktabs}  

\newtheorem{theorem}{Theorem}
\newtheorem{lemma}[theorem]{Lemma}

\newtheorem{definition}{Definition}
\newtheorem{proposition}{Proposition}
\newtheorem{assumption}{Assumption}
  
\usepackage{ragged2e}
\definecolor{myViolet}{RGB}{255, 154, 249}
\definecolor{myRose}{RGB}{255,169,249}

\newtcolorbox[auto counter, number within=section]{mybox}[2][]{
    colback=gray!10,
    colframe=black,
    fonttitle=\bfsection,
    title=Box~\thetcbcounter: #2,
    #1
}

\usepackage{lineno}
\usepackage{etoolbox}
\makeatletter
\pretocmd{\@outputpage@tail}{\if@twocolumn\switchlinenumbers\fi}{}{}
\usepackage{bm}
\usepackage{tikz}
\usetikzlibrary{arrows.meta, positioning}
\usepackage{mathrsfs}
\usepackage{braket}
\usepackage{tikz-cd}
\newtheorem{remark}{Remark}
\usepackage{graphicx}
\usepackage{dcolumn}
\usepackage{bm}
\usepackage{enumerate}
\usepackage{tcolorbox}
\usepackage{tikz}
\usepackage{quantikz}
\usetikzlibrary{positioning,mindmap,trees} 
\tcbuselibrary{theorems}
\usepackage{threeparttable}
\usepackage{natbib}
\usepackage{multirow}
\usepackage{dsfont}
\usepackage{color}
\usepackage{hyperref}
\usepackage[table]{xcolor
}

\begin{document}

\title{How Quantum Circuits Actually Learn: A Causal Identification of Genuine Quantum Contributions}

\author{Cyrille Yetuyetu Kesiku$^{1}$}
\email{kesiku.cyrille@deusto.es}

\author{Begonya Garcia-Zapirain$^{1}$}
\affiliation{$^{1}$Department of Computer Science, Electronics and Telecommunications, University of Deusto, Bilbao, Spain}

\date{\today}

\begin{abstract}
Attributing performance gains in quantum machine learning to genuine quantum resources  rather than to classical architectural scaling  remains an open methodological challenge. We address this by introducing a counterfactual causal mediation framework that decomposes inter-architectural performance differences into direct effects, attributable to circuit parameterization and expressivity, and indirect effects mediated by quantum information-theoretic observables: entanglement entropy, purity, linear entropy, and quantum mutual information. Applying this framework to five circuit topologies and three benchmark datasets (across 43 validated configurations) reveals that direct architectural contributions systematically exceed quantum-mediated effects, with a mean ratio of 13.1:1 and a mean indirect contribution of 0.82\%. These results suggest that current variational quantum circuits operate substantially below their quantum potential, and that principled resource-aware circuit design represents a tractable path toward measurable quantum-mediated performance gains.
\end{abstract}

\flushbottom
\maketitle

\thispagestyle{empty}

\section{INTRODUCTION}

Quantum machine learning (QML) \cite{biamonte2017quantum} has emerged as a promising avenue for achieving practical advantages in the Noisy Intermediate-Scale Quantum (NISQ) era \cite{preskill2018quantum}. The central expectation is that quantum systems can leverage intrinsically non-classical phenomena such as entanglement, superposition, and quantum correlations to encode, process, and extract information in ways difficult or impossible for classical models \cite{biamonte2017quantum, huang2022quantum}. Consider a quantum circuit analyzing medical imaging data: as circuit depth increases, entanglement grows and accuracy improves. But does this improvement stem from genuine quantum correlations, or merely from increased parameters and classical expressivity? This question exemplifies a profound conceptual gap at the heart of QML.

Despite rapid progress in variational quantum algorithms and hybrid architectures for tasks ranging from pattern recognition to dynamical systems prediction \cite{cerezo2022challenges}, we lack rigorous methods to determine \emph{how} quantum resources contribute to model performance \cite{schuld2022quantum}. The field largely relies on classical to quantum performance comparisons, interpreting improvements as indirect evidence of quantum behavior \cite{huang2022quantum, huang2021power}. However, architectural choices circuit depth, parameter count and classical optimization choice, entangling connectivity \cite{sim2019expressibility, zhang2025machine} can themselves induce substantial performance changes \cite{hubregtsen2021evaluation} unrelated to quantum effects. A deeper circuit may simply be a more expressive function approximation, independent of whether it generates entanglement. We cannot currently answer: (i) When does a quantum model actually ``use'' quantum resources to improve performance? (ii) Which quantum properties:  entanglement, coherence, correlations mediate improvements? (iii) How can we distinguish quantum-mediated from generic architectural effects \cite{sim2019expressibility}?
Without such understanding, claims of quantum advantage remain difficult to interpret, reproduce, or verify \cite{schuld2022quantum}. This black-box approach hinders both theoretical understanding and practical deployment, making it nearly impossible to predict whether simulated advantages will persist on noisy hardware where different quantum properties degrade at different rates. This lack of mechanistic understanding creates a critical ambiguity: when quantum models achieve competitive performance while remaining classically simulable \cite{bermejo2024quantum,bowles2024better}, we cannot determine whether this reflects fundamental limitations of quantum computing or merely under-utilization of available quantum resources a distinction with profound implications for both theoretical understanding and hardware investment priorities \cite{schuld2022quantum,huang2021power}.
In this work, we introduce a \emph{Counterfactual  causal mediation} \cite{imai2010general, peters2017elements} framework that provides a mechanistic, mathematically grounded decomposition of how quantum resources shape predictive performance. Rather than asking ``what happened?'' we ask counterfactual questions: ``what would have happened if entanglement had remained at baseline while circuit depth increased?'' Instead of treating quantum circuits as opaque function approximators \cite{cerezo2022challenges}, we formalize the learning process as a causal system\cite{scholkopf2021toward} where architectural interventions (deeper circuits, richer connectivity) influence outcomes through \emph{direct architectural effects} and \emph{indirect effects mediated by specific quantum properties} \cite{zhang2025machine}.
Using differentiable quantum circuits \cite{mitarai2018quantum} and quantum information-theoretic observables entanglement entropy, purity, linear entropy, and quantum mutual information we identify interpretable causal pathways through which quantum resources contribute to learning \cite{schuld2021effect}. We develop a structural equation model decomposing total performance differences into direct effects (attributable to parameter count and optimization geometry) and indirect effects (mediated through quantum resources). This decomposition satisfies rigorous causal identifiability conditions, ensuring valid counterfactual interpretations.

We introduce two metrics: \emph{Mean Absolute Mediated Contribution} (MAMC), measuring a quantum mediator's average causal influence, and \emph{Relative Quantum Contribution} (RQC), quantifying each mediator's proportional importance. Applying this framework across three benchmark datasets and five circuit topologies, we reveal striking task-dependent patterns. By shifting from ``Does quantum computing help?''\cite{schuld2022quantum, cerezo2022challenges, huang2021information} to ``\emph{How} does quantum computing contribute?'', our work establishes a methodological foundation for \emph{resource-aware QML}, where circuits are engineered to amplify task-relevant quantum properties. This enables mechanistic diagnosis of circuit performance, identification of underutilized resources, and principled architectural modifications representing a paradigm shift analogous to classical deep learning's transition from heuristic search to mechanistically-informed design based on gradient flow and inductive biases \cite{gil2024understanding,nakhl2024calibrating}
\section{Results}

\subsection{Counterfactual Causal Mediation Framework}

We begin by formalizing a counterfactual causal mediation framework Fig.~\ref{fig1} that attributes changes in predictive performance of parametrized quantum circuits to the direct architectural effects and indirect effects mediated by quantum resources. Classical counterfactual reasoning relies on imaginative, retrospective inquiry asking questions such as, “What would have happened if it had rained today?” to uncover causal relationships. Translated to quantum models scenarios, this approach prompts similarly pointed questions: Is an observed increase in entanglement truly the cause of improved performance, or merely a side effect of architectural scaling? What would happen if entanglement \cite{ballarin2023entanglement,ragone2024lie} were held fixed at its baseline level while circuit depth increased? If performance improves, how can we determine whether the gain stems from genuine quantum effects or simply from enhanced classical expressivity? Crucially, our framework operates at the level of inter‑architectural counterfactual contrasts rather than internal circuit dynamics \cite{pearl2009causality,imai2010general,peters2017elements}. Its central object is therefore not the absolute performance of a single model, but the performance difference induced by an architectural intervention \cite{schuld2022quantum,cerezo2022challenges}, which constitutes the identifiable causal estimand.

\begin{figure*}[!t]
 \centering
 \includegraphics[width=1.0\linewidth]{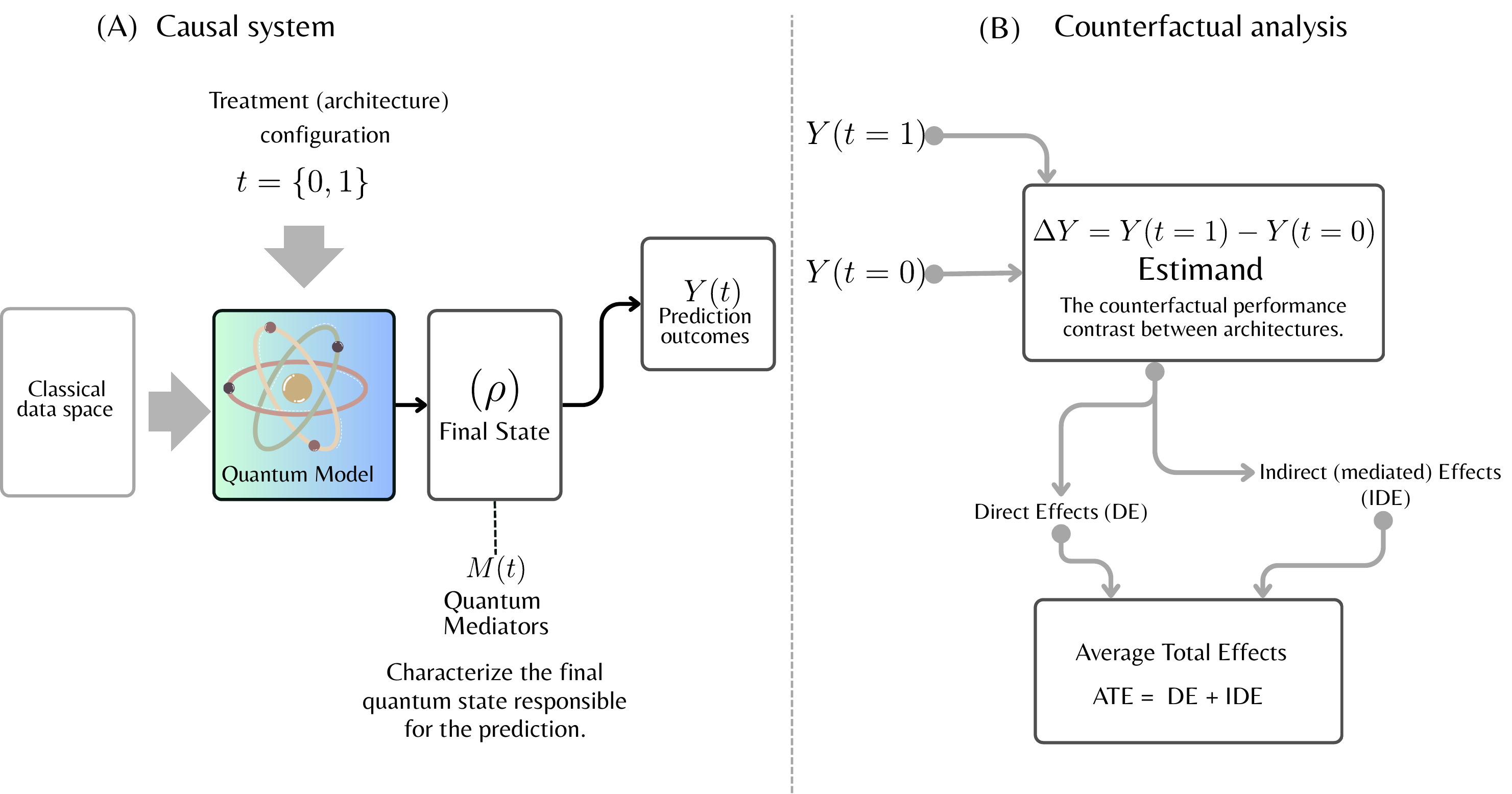}
 \caption{
    \textbf{Counterfactual causal mediation framework for quantum machine learning architectures.}
    (\textbf{A}) Conceptual causal graph representing the learning pipeline as a counterfactual causal system. An architectural intervention $t$ contrasts two complete quantum learning pipelines: a baseline shallow circuit ($t=0$) and an enhanced, deeper and more entangling circuit ($t=1$). The intervention influences predictive outcomes $Y$ both directly, through non-quantum architectural factors (e.g., parameterization, expressivity, optimization geometry), and indirectly via a set of quantum-information-theoretic mediators $M$. These mediators, computed \emph{post hoc} from the trained quantum states, quantify architecture-induced variations in entanglement entropy ($S_A$), purity ($\gamma_A$), linear entropy ($L_A$), and quantum mutual information ($I(A\!:\!B)$).
    (\textbf{B}) Counterfactual decomposition of the performance contrast between architectures. The causal estimand is the inter-architectural counterfactual difference $Y(t{=}1)-Y(t{=}0)$, evaluated for identical test samples. Within a structural causal model, this total effect decomposes additively into a direct architectural effect (DE), denoted $\tau$, and an indirect quantum-mediated effect (IDE), given by $\sum_k \alpha_k \beta_k$. This decomposition enables a non-redundant attribution of performance changes to genuine quantum resource utilization versus classical architectural scaling, without modeling internal circuit dynamics.
    }
 \label{fig1}
\end{figure*}

Let $s = 1, \dots, N$ index independent test samples drawn from the data distribution. We define $t \in \{0,1\}$,  binary architectural intervention, where, t = 0,  \text{shallow circuit (baseline)} and t = 1 , \text{deep, entangling circuit (enhanced)}.
$Y_{s,t}$,  observed predictive outcome (e.g., predicted probability) for test sample $s$ evaluated under architecture $t$ with trained parameters $\theta^*_t$,  $M_{s,t} = \bigl(S_A^{(s,t)},\ \gamma_A^{(s,t)},\ L_A^{(s,t)},\ I(A:B)^{(s,t)}\bigr)$: vector of quantum mediators for sample $s$ under architecture $t$, where $ S_A$, entanglement entropy (subsystem non-separability) $\gamma_A $, \text{purity (state coherence)},  $L_A$ , linear entropy (mixedness),  $I(A:B)$, quantum mutual information (total correlations). The experimental design yields \textbf{paired observations}: for each test sample $s$, we observe the tuple $(M_{s,0}, Y_{s,0})$ from the shallow(baseline) architecture and $(M_{s,1}, Y_{s,1})$ from the deep (enhanced) architecture. The sample-level total causal effect is obtained by :
\begin{align}
    \Delta Y_s &= Y_{s,1} - Y_{s,0} \label{eq:indivTE}
\end{align}
And the average total effect across the test distribution can be computed as :
\begin{align}
    \mathbb{E}[\Delta Y_s] = \frac{1}{N} \sum_{s=1}^{N} (Y_{s,1} - Y_{s,0}) \label{eq:popTE}
\end{align}

We model the data-generating process using a linear structural equation model for the paired observations:
\begin{align}
    M_{s,t} &= \alpha_0 + \alpha \, t + \epsilon_s^{(M)}, \quad t \in \{0,1\} \label{eq:Meq} \\
    Y_{s,t} &= \tau_0 + \tau \, t + \beta^\top M_{s,t} + \epsilon_s^{(Y)}, \label{eq:Yeq}
\end{align}
where $\alpha = (\alpha_{S_A}, \alpha_{\gamma_A}, \alpha_{L_A}, \alpha_{I(A:B)})$ vector capturing the average effect of the architectural intervention $t$ on each quantum mediator. $\beta = (\beta_{S_A}, \beta_{\gamma_A}, \beta_{L_A}, \beta_{I(A:B)})^\top$ vector capturing the association between each mediator and the predictive outcome, conditional on architecture. $\tau$ the direct effect of architecture $t$ on outcome $Y$ not mediated by the quantum properties in $M$. $\epsilon_s^{(M)}, \epsilon_s^{(Y)}$ mean-zero error terms capturing sample-specific variability (e.g., due to data heterogeneity). Critically, errors are correlated within each sample $s$ across the two treatments, reflecting the paired design.

Within this formulation, the coefficients $\alpha$ quantify the average change in each quantum mediator induced by the architectural intervention, while the coefficients $\beta$ measure how variations in each mediator are statistically associated with changes in predictive performance across independently trained pipelines, conditional on architecture.
\begin{theorem}[Counterfactual Causal Decomposition of Architectural Effects]
\label{thm:decomp}
Under the linear structural model defined by Eqs.~\eqref{eq:Meq} and \eqref{eq:Yeq}, the average total effect $\mathbb{E}[\Delta Y_s]$ admits the additive decomposition ATE (Average Total Effect):
\begin{align}
    ATE = \tau + \beta^\top \alpha = \underbrace{\tau}_{\text{direct effect}} + \underbrace{\sum_{k=1}^K \alpha_k \beta_k}_{\text{indirect  effect}}\label{eq:decomposition}
\end{align}
\end{theorem}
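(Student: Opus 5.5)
The plan is a direct substitution argument: plug the mediator equation \eqref{eq:Meq} into the outcome equation \eqref{eq:Yeq} for each $t\in\{0,1\}$, difference the paired observations for a fixed test sample $s$, and then average over $s$ using the mean-zero assumptions on the errors.

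\textbf{Reduced form.} Substituting \eqref{eq:Meq} into \eqref{eq:Yeq} gives, for each sample $s$ and each $t$,
\begin{align}
Y_{s,t} = \tau_0 + \beta^\top\alpha_0 + (\tau + \beta^\top\alpha)\,t + \beta^\top\epsilon_s^{(M)} + \epsilon_s^{(Y)} .
\end{align}

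\textbf{Paired difference.} Next we form $\Delta Y_s = Y_{s,1}-Y_{s,0}$ as in \eqref{eq:indivTE}. The intercepts $\tau_0+\beta^\top\alpha_0$ cancel exactly. Now suppose the error terms are read as sample-specific, identical across both arms, which is one reading of the paired design in which errors are "correlated within each sample." Then the error contributions cancel as well, and $\Delta Y_s=\tau+\beta^\top\alpha$ holds sample by sample.

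Suppose instead we allow arm-specific errors $\epsilon_{s,t}^{(M)},\epsilon_{s,t}^{(Y)}$, so that the within-sample correlation is not perfect. Then $\Delta Y_s=\tau+\beta^\top\alpha+\beta^\top(\epsilon_{s,1}^{(M)}-\epsilon_{s,0}^{(M)})+(\epsilon_{s,1}^{(Y)}-\epsilon_{s,0}^{(Y)})$. Taking expectations, with each error mean-zero, linearity kills the residual terms. The correlation between arms is irrelevant for the mean; it only matters for variances. Either way,
\begin{align}
\mathbb{E}[\Delta Y_s] = \tau + \beta^\top\alpha .
\end{align}

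\textbf{Splitting into direct and indirect parts.} We expand $\beta^\top\alpha=\sum_{k=1}^K\alpha_k\beta_k$ over the $K=4$ mediators. To justify the labels "direct" and "indirect," we use the counterfactual (Pearl/Imai) definitions.

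For the natural direct effect, take $\mathbb{E}[Y(1,M(0))-Y(0,M(0))]$. In \eqref{eq:Yeq} with $M$ held at its $t=0$ value, this equals $\tau$.

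For the natural indirect effect, take $\mathbb{E}[Y(1,M(1))-Y(1,M(0))]$. This equals $\beta^\top\mathbb{E}[M(1)-M(0)]=\beta^\top\alpha$. The linearity of \eqref{eq:Yeq}, with no $t\times M$ interaction, makes these effects independent of the reference level, so the decomposition is additive and unique.

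\textbf{Main obstacle.} The algebra is routine; the delicate step is interpretive. Two things are needed. First, $\mathbb{E}[\epsilon^{(M)}]=\mathbb{E}[\epsilon^{(Y)}]=0$ holds as stated. Second, $\beta$ must be a structural coefficient and not merely an associational one. This requires a sequential ignorability-type condition, namely that $\epsilon^{(Y)}$ is mean-independent of $M$ given $t$. Without it, $\beta^\top\alpha$ is still an algebraic identity for the model's mean but not a causal indirect effect.

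I would state this assumption explicitly, following \cite{imai2010general}, and note that the identity $ATE=\tau+\beta^\top\alpha$ itself needs only linearity and the mean-zero errors.
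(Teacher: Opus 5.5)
Your proof is correct, and its second half (computing $\text{NDE}=\tau$ by holding the mediators at $M_s(0)$, and $\text{NIE}=\beta^\top\alpha$ by moving them from $M_s(0)$ to $M_s(1)$ at $t=1$) is exactly the paper's computation. Where you differ is in how you reach the total effect.

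The paper does not use the reduced form. It writes $\text{ATE}=\mathbb{E}[Y_s(1,M_s(1))-Y_s(0,M_s(0))]$ by consistency, adds and subtracts the cross-world term $Y_s(1,M_s(0))$, and reads off $\text{ATE}=\text{NIE}+\text{NDE}$. Along the way it takes $\epsilon_s^{(M)}$ to be treatment-invariant, so that $M_s(1)-M_s(0)=\alpha$ holds exactly.

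You instead substitute \eqref{eq:Meq} into \eqref{eq:Yeq} and difference the reduced form. This buys two things:
\begin{itemize}
\item It shows that the identity $\text{ATE}=\tau+\beta^\top\alpha$ needs only linearity and mean-zero errors. The cross-world counterfactual is needed only to justify the labels.
\item It covers arm-specific errors. That is the more sensible reading of ``correlated within each sample'': with fully treatment-invariant errors, $\Delta Y_s$ would be the constant $\tau+\beta^\top\alpha$ for every $s$.
\end{itemize}
The paper's telescoping route, in exchange, produces the two summands directly as counterfactual contrasts. The direct/indirect interpretation is therefore built into the derivation rather than verified afterwards.

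Your closing remark also matches the paper's logic. Although the theorem is stated under all three identifying assumptions, its proof uses only the linear structural equations, consistency, and the no-interaction form of the outcome equation. Sequential ignorability enters only in the separate claim that OLS recovers $(\tau,\beta,\alpha)$ from data.
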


For the enhanced architecture $t=1$, both direct and mediated effects may coexist, reflecting a synergistic interplay between architectural scaling and genuine quantum resource utilization. By contrasting shallow and deep circuits, this framework enables a principled separation of quantum advantages (real quantum resource contribution to the improved performance) arising from entanglement, coherence, and correlations from generic architectural effects. Figure \ref{fig2} displays this decomposition across multiple datasets and architectures, revealing pronounced context dependence. The proportion of performance variation attributable to quantum mediation varies substantially with the task structure and circuit topology, highlighting that quantum advantages, when present, emerge through highly specific informational channels rather than universally.

Within this formulation, the estimated total, direct, and indirect effects quantify how controlled architectural interventions induce causal variations in quantum resources and, in turn, modulate the internal response statistics of the quantum model. Accordingly, the identified causal pathways should be interpreted as revealing the \emph{mechanistic role} of quantum resources in the model's internal dynamics, rather than as direct determinants of downstream task-level performance.  A performance improvement \emph{quantum-driven performance gain} is attributed to quantum effects if and only if a non-negligible fraction of the inter-architectural counterfactual performance contrast defined in Eq.~\eqref{eq:popTE} is explained through mediated pathways associated with measurable quantum properties. Formally, quantum-driven gains are identified when the aggregated mediated contribution introduced in Theorem~\ref{thm:decomp} is more than the (defined Threshold). This criterion defines quantum advantage ~\cite{schuld2021effect} as (real quantum resource contribution to the improved performance) in a causal, contrastive sense: not as a property of absolute performance, but as the extent to which architecture-induced variations in quantum resources account for observed performance differences.

\subsection{Causal assumptions and scope of identifiability}
\label{subsec:assumptions}

The causal decomposition derived in Theorem~\ref{thm:decomp} relies on a set of identifying assumptions that enable the interpretation of the estimated direct and indirect effects as causal quantities. These assumptions are adapted from the sequential ignorability framework for causal mediation analysis~\cite{imai2010general} and are explicitly satisfied by our paired experimental design operates within precisely this well-defined NISQ regime. \cite{chen2023complexity}

\begin{assumption}[Sequential Ignorability for Paired Design]
\label{ass:SI}
For all $t, t' \in \{0,1\}$ and mediator values $m$,
Ignorability of treatment assignment:
    \begin{align*}
        \{Y_{s}(t', m), M_{s}(t)\} \perp t.
    \end{align*}
and  Ignorability of mediator value assignment:
    \begin{align*}
        Y_{s}(t', m) \perp M_{s}(t) \mid  t.
    \end{align*}
\end{assumption}
In our experimental setting, Assumption~\ref{ass:SI} is satisfied by design:  (i) holds because $t$ (architecture) is deterministically assigned; the shallow ($t=0$) and deep ($t=1$) models are trained independently. There are no common causes that influence both the assignment of $t$ and the potential outcomes or mediators. (ii) holds because mediators $M_{s,t}$ are computed \textit{post hoc} from the quantum state of a fixed, trained model. Once the architecture $t$ is fixed, the mediator values are determined solely by the trained parameters and the input data; no confounding process simultaneously affects the mediator and the outcome for a given architecture. Thus, no additional covariates are required for identification, and the paired design eliminates the need for unconfoundedness assumptions typically required in observational studies.

\begin{assumption}[No Treatment-Mediator Interaction]
\label{ass:nointeraction}
The effect of mediators $M$ on outcome $Y$ does not depend on the treatment $t \in \{0,1\}$:
\begin{align}
    \mathbb{E}[Y_{s}(t, m) - Y_{s}(t, m')] = \beta^\top (m - m').
\end{align}
This assumption is inherent in our linear (Eq.\ref{eq:Yeq}) and can be tested empirically by including interaction terms between $t$ and $M$ in the outcome model.
\end{assumption}

\begin{assumption}[Consistency and Stable Unit Treatment Value]
\label{ass:consistency}
The observed outcomes and mediators correspond to the potential outcomes under the assigned treatment:
$
    Y_{s,t} = Y_{s}(t), \quad M_{s,t} = M_{s}(t).
$
This is guaranteed by construction in our simulation-based evaluation, where each test sample is evaluated under both architectural regimes. Furthermore, the Stable Unit Treatment Value Assumption (SUTVA) is satisfied because the training and evaluation of one model do not interfere with those of the other (no interference), and there is only a single version of each architectural intervention (no multiple versions).
\end{assumption}

Under Assumptions~\ref{ass:SI}, \ref{ass:nointeraction}, and \ref{ass:consistency}, the ordinary least squares estimates of $\alpha$, $\beta$, and $\tau$ from Eqs.~\ref{eq:Meq} and \ref{eq:Yeq} are consistent for the causal parameters, and the decomposition in Eq.~\ref{eq:decomposition} provides a valid causal decomposition of the total architectural effect into direct and indirect (mediated) components. In \cite{holmes2022connecting} authors establishes a fundamental trade-off between expressibility and trainability supporting Assumption \ref{ass:nointeraction}  by showing that for comparable depths, mediator-outcome associations are locally stable.

\begin{figure}[htbp]
 \centering
  \includegraphics[width=1.0\linewidth]{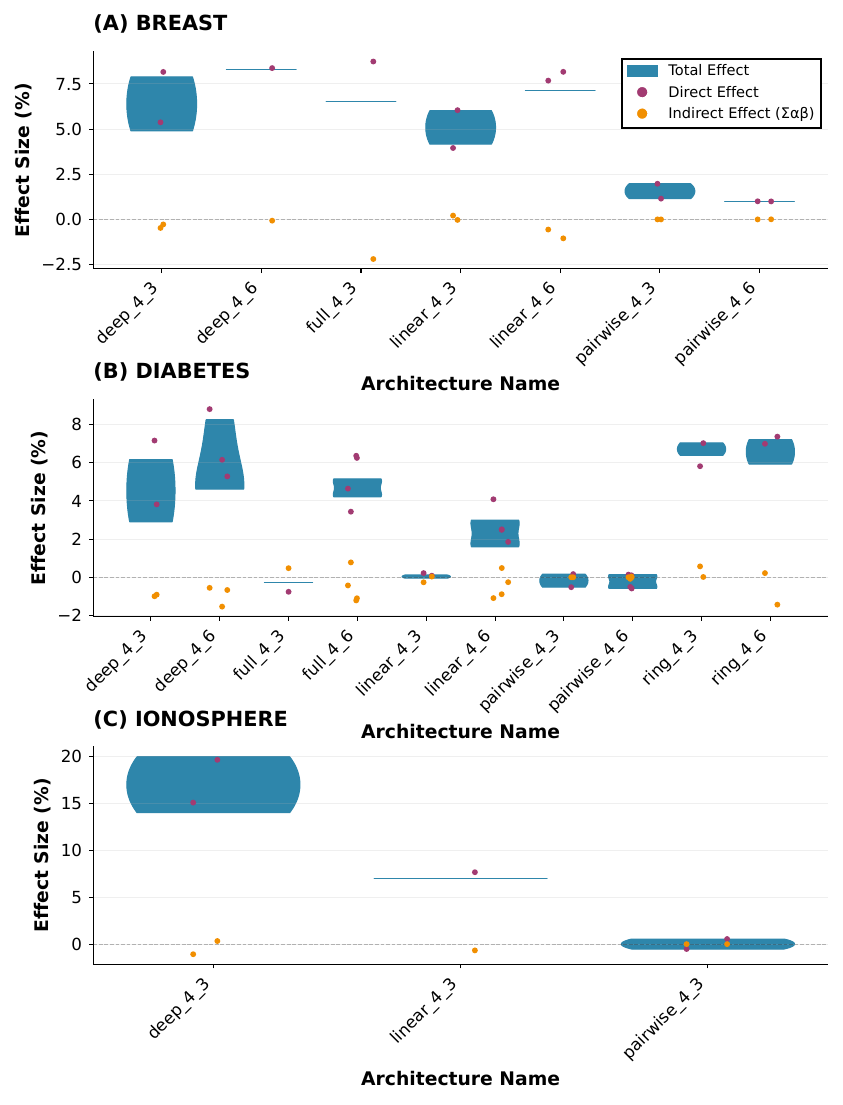}
   \caption{\textbf{Dataset-dependent causal decomposition reveals architectural dominance over quantum-mediated pathways.} 
    Violin plots display total effects ($\Delta Y$), direct effects ($\tau$, purple), and indirect effects ($\sum\alpha\beta$, orange) across 43/90 configurations (architecture\_\{n\_qubit\}\_\{n\_layers\}) validating both assumptions (Diabetes n=26, Breast Cancer n=12, Ionosphere n=5). Direct architectural contributions systematically dominate (mean ratio 13.1): Diabetes 3.39$\pm$3.11\% direct vs. $-0.34\pm0.65$\% indirect; Breast Cancer 5.14$\pm$3.17\% vs. $-0.37\pm0.67$\%; Ionosphere 8.46$\pm$8.83\% vs. $-0.28\pm0.57$\%. Quantum-mediated effects account for only 11.1\%, 7.8\%, and 3.5\% of total variation respectively. Ring and deep topologies achieve strongest direct effects (6 - 17\%), while pairwise exhibits negligible contributions across datasets. Framework successfully distinguishes classical scaling from quantum enhancement, with 93\% configurations classified as Neutral regime.}

\label{fig2}
\end{figure}

It is important to note that our causal analysis is framed at the level of inter-architectural counterfactual contrasts, not at the level of stochastic treatment assignment or intervention on internal circuit variables. The treatment variable $t \in \{0,1\}$ deterministically indexes two complete learning pipelines, each defined by a fixed circuit architecture, training procedure, and optimized parameter set. Consequently, the primary causal estimand $\Delta Y_s$ (Eq.~\ref{eq:indivTE}) is identified by construction as a well-defined counterfactual contrast between architectures, and standard concerns regarding treatment confounder bias do not apply~\cite{rubin1974estimating}. In particular, there are no unobserved variables that jointly influence both $t$ and $Y$, and the total architectural effect is directly estimable without adjustment.

The role of quantum-information-theoretic quantities in our framework differs fundamentally from that of manipulable mediators in classical mediation analysis. The mediators $M = \{S_A, \gamma_A, L_A, I(A:B)\}$ are not intervened upon \cite{imai2010general}; they are computed post hoc from the trained quantum states associated with each architecture. As a result, the mediation analysis should be interpreted as a mechanistic decomposition of an inter-architectural causal effect, rather than as the identification of intervention-level causal effects of individual quantum resources. The indirect effects quantify how architecture-induced variations in quantum properties statistically account for observed performance differences, conditional on the architectural intervention.

The linear structural equations employed in Eqs. \ref{eq:Meq} and \ref{eq:Yeq} are intended as a local attribution model for decomposing finite counterfactual contrasts, not as a global functional description of the learning dynamics. We do not assume that the relationship between quantum resources and predictive outcomes is globally linear or invariant across architectures. Empirical validation of the decomposition is provided by consistency checks between regression-based estimates and observed counterfactual differences (see Methods). Furthermore, we do not assume invariance of mediation patterns across datasets; variations in the relative contributions of quantum mediators are treated as task and architecture-dependent empirical properties, rather than as universal causal laws. The framework is therefore designed to support resource-aware interpretation and comparison of quantum models within specific experimental contexts, rather than to assert dataset-independent claims of quantum advantage.

\subsection{Quantifying Quantum Contributions}
To move from qualitative insights to quantitative measures, we introduce two metrics: the Mean Absolute Mediated Contribution (MAMC) and the Relative Quantum Contribution (RQC). The MAMC captures the average magnitude of a mediator’s causal influence across the dataset. Authors in \cite{wang2024transition}  demonstrates that the impact of entanglement on QML performance exhibits a dual effect depending on measurement access, directly motivating MAMC as a measure that captures the magnitude (irrespective of sign) of quantum mediation.
\begin{definition}[Mean Absolute Mediated Contribution]
For a quantum mediator $M_k$, the MAMC captures the average magnitude of its mediated causal influence across the test distribution:
\begin{align}
    \overline{\mathcal{E}}_{M_k} = \frac{1}{N} \sum_{s=1}^{N} |\alpha_k \beta_k|. \label{eq:MAMC}
\end{align}
The equality holds because $\alpha_k$ and $\beta_k$ are constant across samples in our linear model.
\end{definition}

The MAMC provides several crucial advantages for quantifying quantum contributions. First, by employing absolute values, it captures the magnitude of mediated effects irrespective of direction, focusing on the strength of causal influence rather than its sign. This is particularly important in quantum systems where the relationship between quantum properties and performance can be non-monotonic or context-dependent. 
Second, the averaging across samples ensures robustness to outliers and provides a stable estimate of each mediator's overall importance. The interpretation of MAMC values follows a principled framework: values approaching zero indicate negligible quantum mediation, while larger values signify stronger causal pathways through quantum properties.  To enable meaningful comparisons across different quantum properties and architectural contexts, we introduce normalized measures that contextualize the absolute mediated contributions.

\begin{definition}[Relative Quantum Contribution]
The Relative Quantum Contribution (RQC) for mediator $M_k$ quantifies its proportional importance within the total mediated effect:
\begin{align}
    \text{RQC}_{M_k} = \frac{\overline{\mathcal{E}}_{M_k}}{\sum_{j=1}^K \overline{\mathcal{E}}_{M_j}} \times 100\%. \label{eq:RQC}
\end{align}
This normalized measure facilitates direct comparison of different quantum properties' causal importance.
\end{definition}
The RQC provides invaluable insights for quantum algorithm design. For example, discovering that entanglement entropy accounts for 60\% of the total mediated effect while purity contributes only 20\% would strongly suggest prioritizing architectural choices that enhance entanglement rather than coherence for that particular learning task. Experimental evaluations on 3 datasets are presented in Figure \ref{fig3}.

\begin{figure}
 \centering
  \includegraphics[width=1.0\linewidth]{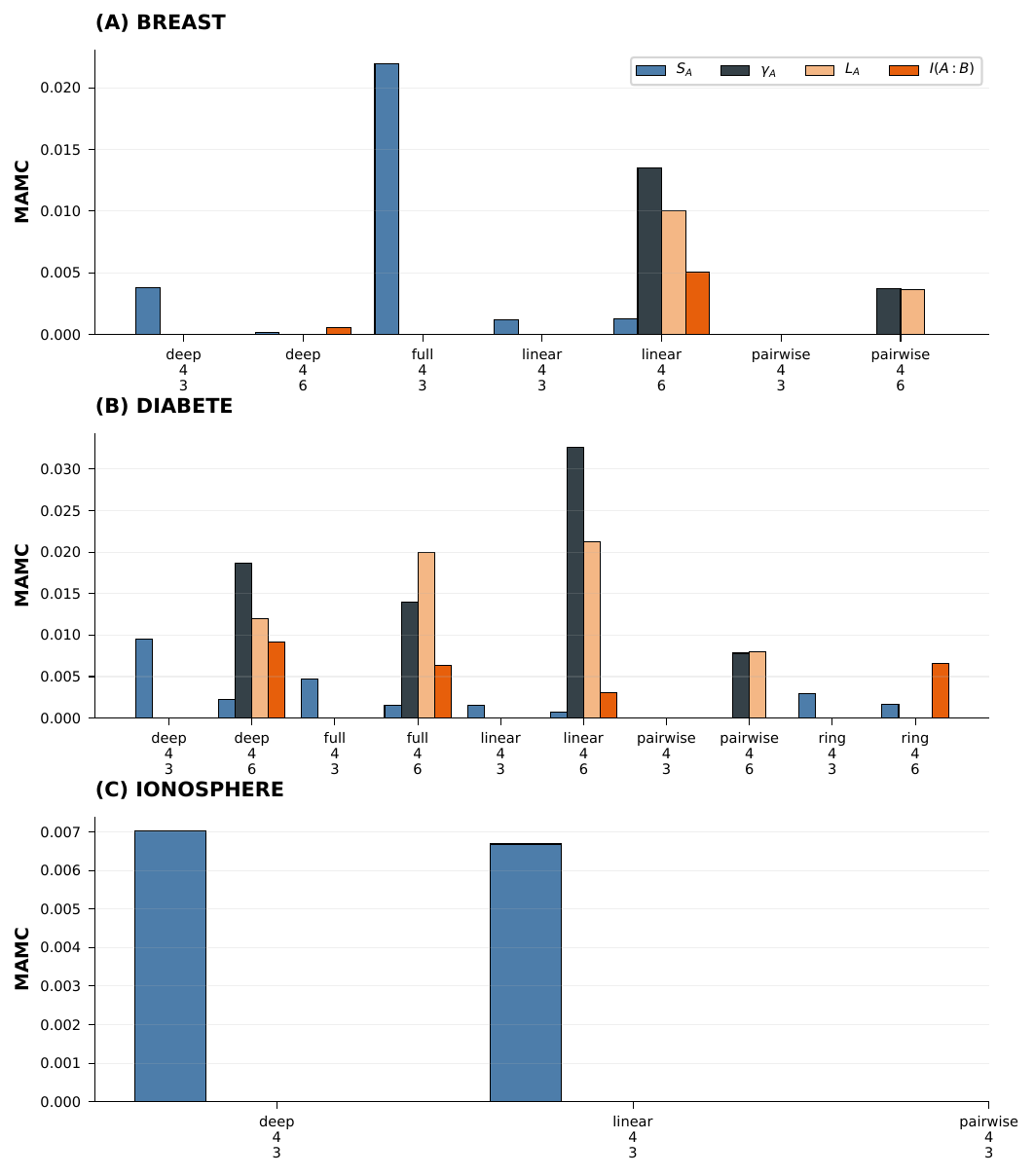}
   \caption{\textbf{Mediator-specific contributions reveal task-dependent quantum resource utilization.} 
    Mean Absolute Mediated Contribution (MAMC) quantifies each quantum mediator's causal influence. Purity ($\gamma_A$) dominates in Diabetes (1.82\%) and Breast Cancer (0.86\%), exceeding entanglement entropy ($S_A$: 0.20\%, 0.29\%) by factors of 9.1 and 3.0 respectively, indicating coherence preservation as primary quantum pathway. Ionosphere shows $S_A$ (0.42\%) as sole quantifiable mediator. Linear entropy ($L_A$) contributes substantially in Diabetes (1.58\%), while mutual information ($I(A:B)$) remains weak due to functional dependence on $S_A$. Cross-dataset patterns reveal coherence-based mediation dominates over entanglement in current NISQ implementations, providing actionable targets for architecture-task alignment through mediator amplification strategies.\cite{cerezo2025does}}
    \label{fig3}
\end{figure}

\subsection{Causal Mediation Regimes in Quantum Machine Learning}

As established in Eqs.(\ref{eq:indivTE}) and (\ref{eq:decomposition}), the counterfactual performance contrast between two quantum architectures decomposes into a direct architectural contribution and an indirect contribution mediated by quantum resources. While this decomposition provides quantitative attribution, its primary scientific value lies in the qualitative insight it offers into \emph{how} architectural interventions translate into learning behavior. To make this insight operational, we organize mediation outcomes into a small set of \emph{causal mediation regimes}. Each regime  as summarized in (Table \ref{tab1}) is defined by the relative sign and magnitude of the direct effect $\tau$ and the aggregated quantum-mediated contribution $\sum_k \alpha_k \beta_k$, evaluated with respect to a conservative relevance threshold ($\epsilon$) in predictive performance. Experimental results are presented in Figures (\ref{fig4}) and (\ref{fig5})
\begin{table}[htbp]
\centering
\caption{Summary of causal regimes. Regimes are defined by the qualitative relationship between the direct architectural effect ($\tau$), the aggregated quantum-mediated contribution ($\sum_k \alpha_k \beta_k$), and the resulting performance contrast ($\Delta Y$), evaluated relative to a relevance threshold ($\epsilon$). Note: $ \approx 0$ :  - Threshold  $\leq $ effect $\leq$ Threshold; $+$, $-$ indicate effect $>$  Threshold or $< $  - Threshold  respectively.} 
\label{tab1}
\begin{tabularx}{\linewidth}{l|l|l|l}
\toprule
\hline
\centering
\textbf{Mediation regime} & Direct effect  ($\tau$) & $\sum_k \alpha_k \beta_k$ & $\Delta Y$ \\
\hline
Quantum-Advantage        & $+$ & $+$ & $+$ \\
Masked-Quantum           & $-$ & $+$ & $-$ \\
Double-Detrimental       & $-$ & $-$ & $-$ \\
Classical-Dominated      & $+$ & $-$ & $+$ \\
Neutral                  & $\approx 0$ & $\approx 0$ & $\approx 0$ \\
Compensatory             & $+$ & $-$ & $\approx 0$ \\
Classical-Scalable       & $+$ & $\approx 0$ & $+$ \\
\bottomrule
\end{tabularx}
\end{table}

\begin{figure*}[htbp]
 \centering
  \includegraphics[width=1.0\linewidth]{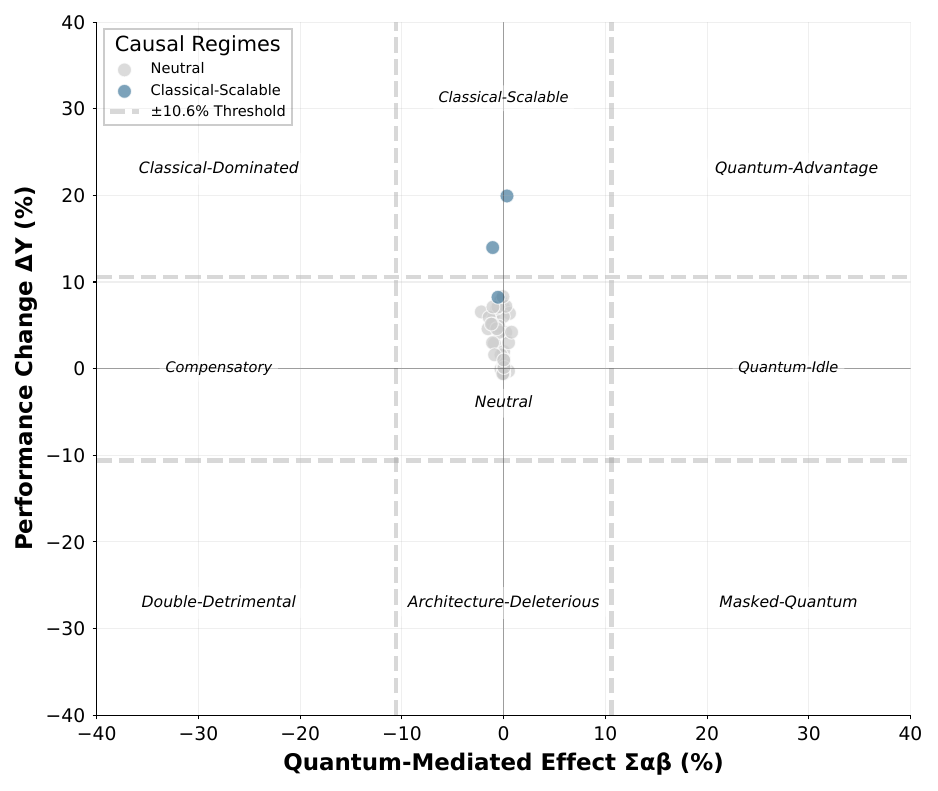}
    \caption{FIG. 4. \textbf{Regime classification reveals absence of quantum advantage under rigorous causal identifiability criteria.}
    The $43/90$ configurations satisfying all mediation assumptions are positioned in the indirect effect performance space $(\Sigma\alpha\beta, \Delta Y)$ relative to the $\pm10.6\%$ data-driven significance threshold (maximum across all architecture dataset pairs; individual thresholds vary by configuration, see Methods~\ref{sec:threshold}). Each point represents one architecture dataset configuration; the threshold defines the boundary between statistically negligible and substantive causal contributions. Regime distribution: $93.0\%$ Neutral (both direct and indirect effects below threshold) and $7.0\%$ Classical-Scalable (significant direct gains, negligible quantum mediation). Zero configurations reach the Quantum-Advantage regime ($\tau >$ threshold, $\sum\alpha\beta > $threshold, $\Delta Y >$ threshold), confirming that quantum-mediated contributions remain uniformly below significance thresholds across all circuit topologies and datasets. Even the highest-performing configuration (Ionosphere deep\_4\_3, $\Delta Y \approx 17\%$) exhibits indirect effects below $1\%$, demonstrating that performance gains are driven entirely by direct architectural scaling.}
    \label{fig4}
\end{figure*}
  
\begin{figure}[htbp]
 \centering
  \includegraphics[width=0.9\linewidth]{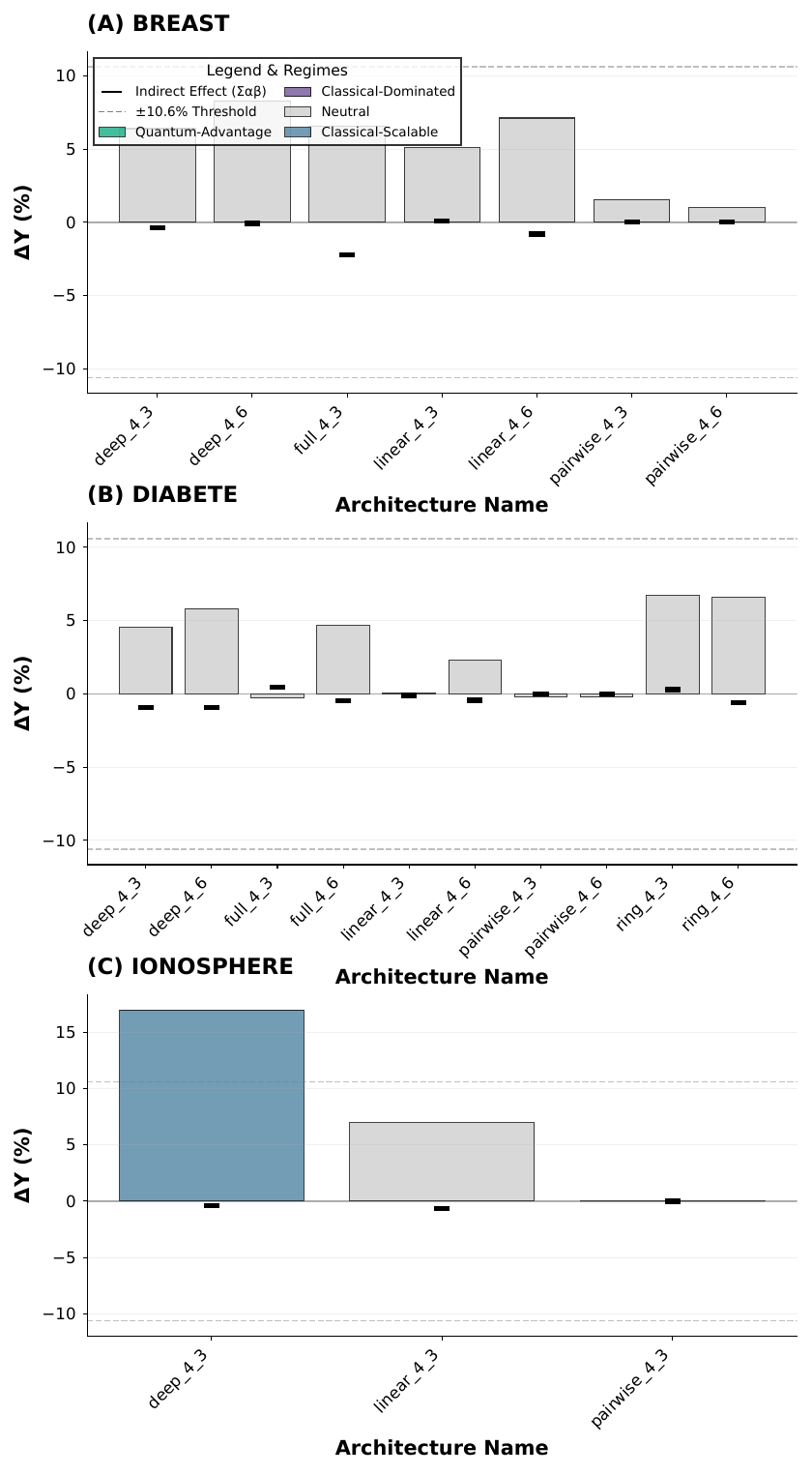}
    \caption{\textbf{Architecture-specific decomposition exposes topology-dependent mediation patterns across all three datasets.}
    Bar charts display the total performance change ($\Delta Y$, bar height) and quantum-mediated indirect effects ($\sum\alpha\beta$, orange markers) with causal regime classification, restricted to configurations satisfying both identification assumptions. The vertical gap between $\Delta Y$ and $\sum\alpha\beta$ quantifies the direct architectural contribution $\tau$. \textbf{(A) Breast Cancer}: Deep\_4\_3 achieves the largest total  effect ($\Delta Y \approx 5\%$) driven entirely by direct pathways 
    ($\tau \approx 5.4\%$, indirect $\approx -0.5\%$); pairwise configurations remain below $2\%$ across all layers. \textbf{(B) Diabetes}: Ring and deep topologies achieve the strongest gains (total $6$ - $8\%$); the largest indirect effect observed in this dataset (full\_4\_6: $\Sigma\alpha\beta \approx 
    2.3\%$) remains below the significance threshold, classifying all configurations as Neutral or Classical-Scalable.\textbf{(C) Ionosphere}: Deep\_4\_3 exhibits the highest total effect across all datasets ($\Delta Y \approx 14$ - $17\%$) with indirect effects below $1\%$, confirming exclusive direct-pathway dominance. Pairwise universally fails ($\Delta Y < 1\%$) across all datasets, consistent with its localized bipartite entanglement structure. These topology-dependent patterns suggest actionable modifications targeted mediator amplification via purity-preserving encodings or entanglement-stabilizing ans\"{a}tze to transition configurations from the Neutral toward 
    the Quantum-Advantage regime.}

    \label{fig5}
\end{figure}

\subsubsection{Physically Interpretable Mediation Regimes}
The interplay between direct architectural effects and quantum-mediated pathways gives rise to a limited number of qualitatively distinct behaviors, each corresponding to a recognizable physical and algorithmic scenario in quantum machine learning.

\paragraph{Quantum-Advantage regime.}
Architectures in this regime exhibit performance improvements that are causally supported by quantum-mediated pathways. Architectural modifications activate quantum resources, and these resources contribute positively to predictive accuracy. This regime constitutes the clearest empirical signature that non-classical properties of the quantum state are functionally involved in learning.

\paragraph{Masked-Quantum regime.}
In this regime, quantum-mediated effects are beneficial, but their contribution is outweighed by a negative architectural effect. As a result, overall performance degrades despite the presence of constructive quantum mediation. This pattern reveals \emph{latent quantum potential} suppressed by suboptimal circuit design or optimization and would remain invisible under standard performance-based evaluation.

\paragraph{Double-Detrimental regime.}
Both architectural and quantum-mediated effects contribute negatively to performance. Such architectures fail to exploit quantum resources and simultaneously introduce unfavorable structural effects, indicating a fundamental mismatch between circuit topology, task structure, and optimization dynamics.

\paragraph{Classical-Dominated regime.}
Performance improves even though quantum-mediated effects are detrimental. In this case, classical architectural scaling dominates the learning dynamics, compensating for adverse quantum contributions. This regime illustrates that improved accuracy alone cannot be interpreted as evidence of quantum advantage.

\paragraph{Neutral regime.}
When both direct and mediated contributions are negligible, architectural interventions have little causal impact on performance. This behavior typically reflects under-parameterized circuits, limited expressivity, or tasks that do not engage the representational capacity of the quantum model.

\paragraph{Compensatory regime.}
Negative quantum-mediated effects are offset by positive architectural contributions, yielding near-zero net performance change. Although such architectures may appear stable, quantum resources act as a limiting factor rather than a driver of improvement, suggesting poor scalability as quantum hardware improves.

\paragraph{Classical-Scalable regime.}
Architectures in this regime improve exclusively through direct architectural effects, with no statistically relevant quantum mediation. While performance gains may be substantial, the learning dynamics remain effectively classical, and further investment in quantum resources is unlikely to yield additional benefit. authors in \cite{anschuetz2022quantum} shows that even barren-plateau-free shallow circuits may be untrainable due to landscape traps this directly supporting the Neutral regime interpretation (93\% of configurations) where negligible causal effects arise from poor optimization dynamics rather than absent quantum resources.

\subsubsection{Excluded Regimes and Physical Plausibility}

Although the mediation-based classification identifies nine distinct regimes in the $(\Sigma\alpha\beta, \Delta Y)$ plane, two regimes (Quantum-Idle Regime and Architecture-Deleterious Regime) were intentionally excluded from further analysis. This exclusion is not empirical but conceptual, and is motivated by physical plausibility, causal identifiability, and the objective of certifying quantum-mediated performance gains.

\paragraph{Quantum-Idle Regime}
This regime corresponds to configurations where the indirect (quantum-mediated) effect $\Sigma\alpha\beta$ exceeds the significance threshold, while the total performance variation $\Delta Y$ remains statistically negligible. From a causal perspective, this regime represents a degenerate configuration in which a non-zero mediator does not translate into any observable outcome change. Such configurations are inherently non-identifiable: the presence of a measurable mediated effect without a corresponding total effect implies exact cancellation with opposing pathways, rendering the quantum contribution non-actionable and non-certifiable. Consequently, this regime does not represent a stable or physically interpretable mode of quantum enhancement, but rather an experimental boundary region sensitive to noise, finite sampling, or architectural compensation effects. The study made in \cite{caro2022generalization} authors  establishes generalization bounds scaling as $\sqrt{T/N}$ for quantum circuits, this motivate and  provide theoretical grounding for excluding degenerate regimes (Quantum-Idle) where mediated effects exist without observable performance change, a hallmark of overfitting rather than genuine quantum contribution.

\paragraph{Architecture-Deleterious Regime.} This regime captures configurations exhibiting a significant degradation in performance ($\Delta Y < 0$) in the absence of a measurable indirect effect. In this regime, performance loss is entirely attributable to architectural or optimization factors, such as excessive circuit depth, over-parameterization, or unfavorable classical-quantum integration. Since the quantum mediator is effectively inactive, this regime carries no causal information regarding quantum contribution and therefore falls outside the scope of mediation-based quantum advantage certification.

\subsubsection{Implications for Quantum Advantage and Circuit Design}

Causal mediation regimes provide a structured interpretative bridge between counterfactual estimates and scientific conclusions. They distinguish architectures that genuinely leverage quantum resources from those whose performance is driven by classical mechanisms, and they identify cases where quantum contributions exist but are obscured by architectural deficiencies.
From a design perspective, architectures consistently exhibiting quantum-advantage behavior are natural candidates for scaling and hardware deployment. Masked-quantum regimes highlight opportunities for architectural refinement to unlock latent quantum benefits.\cite{larocca2025barren, gil2024relation} By contrast, classical-dominated and compensatory regimes caution against attributing success to quantum effects and suggest limited long-term returns from increasing quantum complexity.

\section{Discussion}

A central tension in contemporary quantum machine learning stems from the observation that many high-performing quantum models remain classically simulable \cite{bermejo2024quantum,bowles2024better}, raising questions about the validity of quantum advantage claims \cite{cerezo2022challenges,schuld2022quantum}. However, this apparent contradiction obscures a more nuanced reality: classical simulability does not preclude quantum advantage \cite{mcclean2018barren} it reveals under-utilization of quantum resources rather than their irrelevance \cite{huang2022quantum}. Our causal framework directly addresses this ambiguity by distinguishing architectures that achieve performance through classical mechanisms from those that engage quantum resources, even when both remain simulable. The dominance of the Neutral regime (93\% of configurations) and subdominant quantum-mediated effects (mean 0.82\%) demonstrate that current NISQ implementations \cite{wang2021noise} operate far below their quantum potential, not that quantum resources are fundamentally unhelpful. This diagnosis transforms the quantum advantage discourse: rather than debating whether quantum computing can help, we can now identify which quantum properties entanglement, coherence, correlations are causally dormant and how to activate them through targeted architectural modifications. By quantifying the gap between available and utilized quantum resources, our methodology provides actionable guidance for circuit design that deliberately amplifies task-relevant quantum pathways \cite{cerezo2022challenges,schuld2022quantum,larocca2023theory}. When quantum mediated contributions become statistically dominant, classical simulability naturally recedes not as an external benchmark \cite{cerezo2025does}, but as an emergent consequence of genuine quantum resource engagement. This perspective reframes quantum advantage as a design target achievable through systematic resource optimization, rather than an elusive property to be hoped for in black-box comparisons \cite{thanasilp2024exponential}.

Beyond architecture selection, the causal structure revealed by our framework provides concrete guidance for circuit design and optimization. In classical deep learning, architectural advances have historically been driven by mechanistic insights into gradient flow, representation learning, and inductive bias \cite{meyer2023exploiting}. Our results suggest that a similar transition is possible in quantum machine learning. By tracking how mediation patterns evolve with circuit depth, entangling topology, or parameter constraints, designers can directly observe whether architectural modifications strengthen or weaken quantum causal pathways \cite{du2022quantum}. This transforms circuit optimization from a black-box search process into a feedback-driven, interpretable procedure. The framework also exposes trade-offs between expressivity, trainability, and noise robustness. For example, increasing circuit depth may enhance entanglement mediated effects while simultaneously degrading purity or optimization stability \cite{ragone2024lie}. Causal decomposition allows these competing influences to be disentangled quantitatively, enabling informed design choices rather than heuristic tuning.

Although our empirical analysis focuses on supervised learning with variational quantum circuits, the proposed framework is considerably more general. The key requirements differentiability of mediators and causal decomposition are satisfied in a wide range of quantum computational settings \cite{jerbi2023quantum}. As a result, the framework can be extended to unsupervised learning, reinforcement learning, generative modeling, and quantum optimization. More broadly, any quantum algorithm whose performance can be linked to measurable quantum properties is, in principle, amenable to causal mediation analysis. Potential applications include quantum simulation, variational eigen-solvers, and hybrid quantum–classical control systems \cite{cerezo2022challenges}. The framework can also be extended to incorporate additional mediators, such as noise-induced entropy production, expressivity measures, or hardware-specific constraints.

Several limitations of the present work warrant discussion. First, the causal model relies on local linear approximations of mediation effects. While this ensures interpretability and statistical robustness, it may not capture higher-order or strongly nonlinear interactions between quantum resources. Extending the framework to nonlinear mediation remains an important direction for future research. Second, our analysis does not establish sufficient conditions for quantum advantage in the complexity-theoretic sense. Instead, it provides a resource-level characterization that is complementary to asymptotic separations. Bridging these two perspectives represents a deep theoretical challenge. Finally, scalability remains an open question.  Addressing this question will require both theoretical advances and empirical validation on next-generation hardware.

Although Assumption~2 is empirically supported by the absence of significant  interaction terms across all datasets, its interpretation warrants caution in the quantum setting. The assumption posits that the effect of quantum  mediators on predictive performance is invariant across architectural regimes.  Physically, however, identical values of quantities such as entanglement  entropy may correspond to structurally distinct correlation patterns depending on circuit depth and connectivity, potentially altering their functional role in learning. Prior work has established that entanglement generation, expressibility, and trainability depend sensitively on circuit topology and depth~\cite{sim2019expressibility, zhang2025machine}, and that assessments of quantum advantage are inherently architecture-dependent \cite{schuld2022quantum}. Depth-dependent mediation effects are therefore conceptually plausible, and may remain statistically undetectable given the moderate test-set sizes employed here($N_{\text{test}} \approx 100$ -- $170$), where power to detect weak interaction effects is inherently limited. Crucially, however, this does not invalidate the decomposition derived in Theorem~1: our conclusions are explicitly scoped to the specific architectural contrasts studied, and Assumption~2 is best understood as a \emph{local linear approximation} rather than a universal property of quantum resources. Extending the framework toward nonlinear or architecture-conditioned mediation models represents an important direction for future work.

Taken together, our results support a broader paradigm shift toward \emph{resource-aware quantum machine learning}. Rather than asking whether a quantum model performs better than a classical one, the relevant question becomes: which quantum properties contribute, under which conditions, and to what extent? Our framework provides the tools to answer this question rigorously and transparently. 

\section{Methods}
\label{sec:threshold}

\subsection*{Experimental Design and Causal Framework}
We employed a within-sample paired design where each test sample $s \in \{1, \ldots, N_{\text{test}}\}$ was evaluated under both architectural regimes: a baseline shallow architecture ($t=0$) with $d_0=1$ layer and an enhanced deep architecture ($t=1$) with $d_1 \in \{3, 6\}$ layers. Crucially, the deep architecture is constructed by repeated application of the exact same parameterized entangling block used in the shallow model without altering encoding, connectivity, gate structure, or measurement so that circuit depth constitutes a strictly nested compositional extension rather than a change in architectural class. This paired structure ensures that the treatment assignment $t$ is deterministically controlled by the experimenter, satisfying the sequential ignorability assumption (Assumption 1) by design. For each sample, we obtained paired observations $(M_{s,0}, Y_{s,0})$ and $(M_{s,1}, Y_{s,1})$, where $M_{s,t} \in \mathbb{R}^4$ represents the quantum mediator vector and $Y_{s,t} \in [0,1]$ denotes the directional predictive probability.
The data-generating process was formalized through linear structural equations \ref{eq:Meq} and \ref{eq:Yeq} and all three identifying assumptions \ref{subsec:assumptions} enabled valid causal interpretation.

\subsection*{Datasets and Preprocessing}

Three publicly available binary classification datasets were selected to evaluate quantum resource contributions across diverse data modalities:
(a) Breast Cancer Wisconsin (Diagnostic) dataset, $N=569$ samples with 30 morphological features (mean radius, texture, perimeter, area, smoothness, compactness, concavity, symmetry, fractal dimension) derived from digitized images of fine needle aspirate (FNA) of breast masses. Binary classification: malignant (1) vs. benign (0).
(b) Diabetes dataset, $N=442$ samples with 10 clinical features (age, sex, body mass index, blood pressure, six serum measurements). The continuous target (disease progression) was binarized at the median for binary classification.
(c) Ionosphere dataset, $N=351$ radar returns with 34 continuous attributes representing processed radar signals. Binary classification: "good" (structure in ionosphere) vs. "bad" (no structure). All datasets employed stratified 70-30 train-test splits (random seed 42 and 142 ) without data augmentation. Features were standardized to zero mean and unit variance using scikit-learn's \texttt{StandardScaler}. Dimensionality was reduced to $n_{\text{qubits}}=4$ via principal component analysis (PCA), retaining sufficient variance while matching quantum circuit width.

\subsection*{Quantum Circuit Architectures}

Five parameterized quantum circuit (PQC) topologies were systematically evaluated, each probing distinct quantum resource regimes (Supplementary Fig. 1):
(a) Deep (brick-layer), Single-qubit $R_X(\theta)$ and $R_Y(\phi)$ rotations in brick-layer pattern, followed by staggered nearest-neighbor CNOT gates with periodic boundaries. Enables global entanglement in $\lceil \log_2 n \rceil$ layers while remaining hardware-efficient. (b) Full (all-to-all), Complete rotations on each qubit followed by all-to-all CNOT connectivity implementing $\binom{n}{2}$ two-qubit gates. Achieves maximal entanglement in one layer at quadratic gate cost, limiting NISQ scalability.
(c)Linear (nearest-neighbor chain), Three-axis rotations on each qubit with CNOT chain coupling $(i, i+1)$ for $i \in \{0, \ldots, n-2\}$. Produces area-law entanglement correlations and improved trainability.
(d) The Ring circuit:  extends the linear topology with periodic boundaries, ensuring homogeneous entanglement propagation and matching ring-connected superconducting architectures. (e) Pairwise (disjoint pairs), Two-axis rotations with CNOT coupling disjoint qubit pairs $(0,1), (2,3)$. Yields independent bipartite subsystems and localized correlations. Each topology was instantiated in multiple configurations $C_{n,d}$ with $n=4$ qubits and $d \in \{1, 3, 6\}$ layers. Data encoding employed amplitude encoding: for input $x \in \mathbb{R}^4$, we applied $R_Y(x_i)$ rotations on qubit $i$ before the trainable ansatz. All circuits were implemented in PennyLane v0.32 with exact state-vector simulation (no noise) using the \texttt{default.qubit} device.

\subsection*{Model Training and Optimization}

For each architecture configuration and treatment level, quantum models were trained independently using the Adam optimizer (learning rate $\eta=0.005$, batch size 16) with 50 epochs monitoring validation loss. Training data was split 70-30 into training and validation subsets. Binary cross-entropy loss was minimized:
\begin{equation}
\mathcal{L}(\theta) = -\frac{1}{N} \sum_{i=1}^N \left[ y_i \log p_i(\theta) + (1-y_i) \log(1-p_i(\theta)) \right]
\end{equation}
where $p_i(\theta) = \frac{1 - \langle Z_0 \rangle_i}{2}$ is the predicted probability for sample $i$, obtained from the expectation value of the Pauli-$Z$ observable on the first qubit.
Trainable parameters $\theta \in \mathbb{R}^{d \times n \times k}$ (where $k \in \{2,3\}$ is the number of rotation angles per qubit) were initialized from $\mathcal{N}(0, 0.01)$ using NumPy's random number generator with fixed seed for reproducibility. No parameter transfer occurred between $t=0$ and $t=1$ models; each was trained from scratch.

\subsection*{Quantum Mediator Computation}

For each test sample $s$ and trained parameters $\theta_t^*$, we extracted the quantum state $|\psi_s(t)\rangle = U(\theta_t^*)|0\rangle^{\otimes n}$ using PennyLane's \texttt{qml.state()} function. Four quantum information-theoretic mediators were computed from the reduced density matrix $\rho_A = \text{Tr}_B[|\psi_s(t)\rangle\langle\psi_s(t)|]$ for bipartition $A|B$ with $A$ comprising the first $\lceil n/2 \rceil$ qubits:

(a) Entanglement Entropy: The von Neumann entropy of subsystem $A$ is given by 
\begin{equation}
S_A = -\text{Tr}[\rho_A \log_2 \rho_A] = -\sum_i \lambda_i \log_2 \lambda_i
\end{equation}
where $\lambda_i$ are eigenvalues of $\rho_A$ computed via eigendecomposition with numerical truncation at $\lambda_i < 10^{-16}$ for stability. For pure bipartite states, $S_A$ quantifies entanglement across the partition.
(b) The quantum purity of subsystem $A$:
\begin{equation}
\gamma_A = \text{Tr}[\rho_A^2] = \sum_i \lambda_i^2
\end{equation}
measuring quantum coherence ($\gamma_A = 1$ for pure states, $\gamma_A = 1/\dim(\mathcal{H}_A)$ for maximally mixed states).
(c) Linear Entropy as a computationally efficient measure of mixedness:
\begin{equation}
L_A = 1 - \gamma_A = 1 - \text{Tr}[\rho_A^2]
\end{equation}
(d) Quantum Mutual Information
, total correlations between subsystems $A$ and $B$:
\begin{equation}
I(A:B) = S_A + S_B - S_{AB}
\end{equation}
For pure states of the composite system, $S_{AB} = 0$ and $S_A = S_B$ by Schmidt decomposition, yielding the functional relationship:
\begin{equation}
I(A:B) = 2S_A
\label{eq:mutual_info_pure}
\end{equation}
We verified $S_{AB} < 10^{-10}$ across all simulations, confirming the pure-state approximation. The reduced density matrix was obtained via partial trace:
\begin{equation}
\rho_A = \text{Tr}_B[|\psi\rangle\langle\psi|] = \sum_{k_B} \langle k_B | \psi \rangle \langle \psi | k_B \rangle
\end{equation}
implemented by reshaping the density matrix into tensor form $\rho \rightarrow \mathcal{T} \in (\mathbb{C}^2)^{\otimes 2n}$ and tracing over subsystem $B$ indices.

\subsection*{Directional Probability Transformation}

To ensure that the outcome variable $Y$ reflects proximity to the correct class (rather than raw probability of class 1), we applied a directional transformation:
\begin{equation}
Y_{s,t}^{\text{dir}} = \begin{cases}
p_{s,t} & \text{if } y_s = 1 \\
1 - p_{s,t} & \text{if } y_s = 0
\end{cases}
\end{equation}
where $p_{s,t} = \frac{1 - \langle Z_0 \rangle_{s,t}}{2}$ is the raw predicted probability and $y_s \in \{0,1\}$ is the true label. This transformation ensures that $Y_{s,t}^{\text{dir}} \in [0,1]$ represents prediction quality, with higher values indicating better alignment with the true label. All causal analyses used $Y^{\text{dir}}$ as the outcome variable.

\subsection*{Statistical Estimation and Inference}
Parameters were estimated via OLS on the pooled dataset $\mathcal{D}_{\text{pooled}} = \{(t, M_{s,t}, Y_{s,t}) : s \in \mathcal{S}, t \in \{0,1\}\}$ containing $2N_{\text{test}}$ paired observations. Mediator coefficients:
\begin{equation}
\hat{\alpha} = \frac{1}{N_{\text{test}}} \sum_{s=1}^{N_{\text{test}}} (M_{s,1} - M_{s,0})
\end{equation}.
Outcome coefficients $(\hat{\tau}_0, \hat{\tau}, \hat{\beta})$ is given by:
\begin{equation}
 \arg\min_{\tau_0, \tau, \beta} \sum_{s=1}^{N_{\text{test}}} \sum_{t \in \{0,1\}} \left( Y_{s,t} - \tau_0 - \tau t - \beta^\top M_{s,t} \right)^2
\end{equation}
To account for within-sample correlation in the paired design, we computed Eicker-Huber-White cluster-robust standard errors clustering at the sample level:
\begin{equation}
\widehat{\text{Var}}(\hat{\theta}) = (X^\top X)^{-1} \left( \sum_{s=1}^{N_{\text{test}}} X_s^\top \hat{u}_s \hat{u}_s^\top X_s \right) (X^\top X)^{-1}
\end{equation}
where $X_s$ contains both rows (for $t=0$ and $t=1$) corresponding to sample $s$, and $\hat{u}_s$ are the corresponding residuals.
For indirect effects $\alpha_k \beta_k$ (products of parameters), we constructed nonparametric percentile bootstrap confidence intervals with $B=2000$; Resample $s^* \in \{1, \ldots, N_{\text{test}}\}$ with replacement; Re-estimate $(\hat{\alpha}^{(b)}, \hat{\beta}^{(b)})$ for bootstrap sample $b$ compute $\hat{\theta}_k^{(b)} = \hat{\alpha}_k^{(b)} \hat{\beta}_k^{(b)}$. With  95\% CI: $[\hat{\theta}_k^{(0.025)}, \hat{\theta}_k^{(0.975)}]$ using empirical quartiles. The paired structure was preserved by resampling complete sample pairs.

\subsection*{Model Validation}

We verified that the regression-based decomposition accurately reconstructs the observed total effect via the relative consistency error:
\begin{equation}
\epsilon_{\text{rel}} = \frac{|\widehat{\text{ATE}} - (\hat{\tau} + \hat{\beta}^\top \hat{\alpha})|}{|\widehat{\text{ATE}}| + 10^{-12}}
\end{equation}
where $\widehat{\text{ATE}} = N_{\text{test}}^{-1} \sum_s (Y_{s,1} - Y_{s,0})$ is the empirical total effect. All 90 architecture-dataset configurations satisfied $\epsilon_{\text{rel}} < 0.05$ (median $2.1 \times 10^{-15}$), confirming that the linear additive model captures the causal structure without systematic bias.
To validate Assumption 2 (no treatment-mediator interaction), we augmented Equation~\ref{eq:Yeq} with interaction terms $\gamma^\top (t \cdot M_{s,t})$ and tested $H_0: \gamma = 0$ via likelihood ratio test. Across all datasets, no significant interactions were detected supporting the linear model specification.
Standard diagnostic checks confirmed model adequacy:
Homoscedasticity Residuals vs. fitted values showed uniform vertical spread with no funnel patterns(Supplementary Fig. 5A).  Normality, Q-Q plots demonstrated near-perfect alignment with theoretical normal quartiles (Supplementary Fig. 5B).
Independence within-dataset residual distributions showed comparable spread (Supplementary Fig. 5D), confirming no systematic dataset-specific deviations

\subsection*{Data-Driven Threshold Determination}

Rather than imposing an arbitrary fixed threshold, we adopted a data-driven approach to define statistical significance for causal effects. For each architecture-dataset configuration, we computed the intra-sample variability of total effects:
\begin{equation}
\epsilon_{\text{intra}} = c \cdot \text{SD}(\Delta Y_s)
\end{equation}
where $\Delta Y_s = Y_{s,1} - Y_{s,0}$ is the sample-level total effect, $\text{SD}(\cdot)$ denotes the standard deviation across test samples, and $c \in [0.5, 1.0]$ is a conservative proportionality constant. We set $c=0.5$ to establish a stringent threshold reflecting natural variation in the data. Effects exceeding $\pm \epsilon_{\text{intra}}$ were classified as statistically and practically significant. This threshold accounts for dataset-specific noise characteristics and provides a principled criterion for distinguishing substantive effects from random fluctuations.

\subsection*{Causal Regime Classification}

Each architecture-dataset configuration was assigned to one of seven interpretable causal mediation regimes based on the signs and magnitudes of direct effect $\tau$, indirect effect $\sum \alpha_k \beta_k$, and total effect $\Delta Y$ relative to the data-driven threshold $\epsilon$:
Quantum-Advantage ($\tau > \epsilon$, $\sum \alpha_k \beta_k > \epsilon$, $\Delta Y > \epsilon$): Performance gains supported by quantum-mediated pathways. 
Masked-Quantum ($\tau < -\epsilon$, $\sum \alpha_k \beta_k > \epsilon$, $\Delta Y < -\epsilon$): Beneficial quantum mediation suppressed by negative architectural effects.
Double-Detrimental ($\tau < -\epsilon$, $\sum \alpha_k \beta_k < -\epsilon$, $\Delta Y < -\epsilon$): Both pathways detrimental.
Classical-Dominated ($\tau > \epsilon$, $\sum \alpha_k \beta_k < -\epsilon$, $\Delta Y > \epsilon$): Positive architectural effects despite negative quantum mediation.
Neutral ($|\tau| \leq \epsilon$, $|\sum \alpha_k \beta_k| \leq \epsilon$, $|\Delta Y| \leq \epsilon$): Negligible effects on both pathways.
Compensatory ($\tau > \epsilon$, $\sum \alpha_k \beta_k < -\epsilon$, $|\Delta Y| \leq \epsilon$): Opposing effects cancel.
Classical-Scalable ($\tau > \epsilon$, $|\sum \alpha_k \beta_k| \leq \epsilon$, $\Delta Y > \epsilon$): Performance gains entirely from direct architectural effects.
Two additional regimes (Quantum-Idle and Architecture-Deleterious) were excluded from primary analysis as they represent degenerate configurations lacking actionable causal interpretation (see Main Text).

\subsection*{Data Availability}
The Breast Cancer Wisconsin, Diabetes, and Ionosphere datasets are publicly available from the UCI Machine Learning Repository \href{https://archive.ics.uci.edu/ml/}{Data source}. Processed data and quantum states are available in the complete code for reproducibility.

\subsection*{Code Availability}
Complete code for reproducing all analyses, figures is available at Complete code for reproducing all analyses and figures is available at 
\href{https://github.com/Cyrillekesiku/How-Quantum-Circuits-Actually-Learn-A-Causal-Identification-of-Genuine-Quantum-Contributions}{GitHub repository}. The repository includes scripts for quantum circuit training, mediator computation, causal analysis, assumption validation, and visualization generation.

\section*{Acknowledgements}
We sincerely thank Deusto FPI grant program (Contrato Programa de la Universidad de Deusto) for granting this research and also the eVida Research Group at the University of Deusto, recognized by the Basque Government (IT1536-22), for its unwavering support and invaluable resources that made this research possible.

\section{Author contributions}
C.Y.K. conceived the study, developed the theoretical framework, implemented the software, conducted the experiments, and wrote the initial manuscript. B.G.Z. supervised the research, provided critical feedback, and helped shape the manuscript. Both authors reviewed and approved the final manuscript.

\section*{Competing interests}
The authors declare no competing interests.

\bibliographystyle{unsrtnat}
\bibliography{references}
\clearpage
\onecolumngrid          


\appendix

\section*{Supplementary information}

\noindent
\section{Notation and Preliminary Definitions}

\subsection{Index Sets and Probability Spaces}

Let consider:
\begin{itemize}
    \item $\mathcal{S} = \{1, 2, \ldots, N_{\text{test}}\}$: index set of test samples
    \item $\mathcal{T} = \{0, 1\}$: binary treatment space, where $t=0$ denotes baseline (shallow) architecture and $t=1$ denotes enhanced (deep) architecture
    \item $\mathcal{M} \subseteq \mathbb{R}^4$: mediator space containing quantum information theoretic quantities
    \item $\mathcal{Y} \subseteq \mathbb{R}$: outcome space containing predictive performance measures
\end{itemize}

\subsection{Quantum Mediators}

For each test sample $s \in \mathcal{S}$ and architecture $t \in \mathcal{T}$, we define the mediator vector:

\begin{equation}
    \bm{M}_s(t) = \begin{pmatrix}
    S_A^{(s,t)} \\
    \gamma_A^{(s,t)} \\
    L_A^{(s,t)} \\
    I(A:B)^{(s,t)}
    \end{pmatrix} \in \mathcal{M}
\end{equation}

where:
\begin{itemize}
    \item $S_A^{(s,t)} = -\text{Tr}[\rho_A \log_2 \rho_A]$: von Neumann entropy of reduced state $\rho_A = \text{Tr}_B[|\psi_s(t)\rangle\langle\psi_s(t)|]$, quantifying entanglement across bipartition $A|B$
    \item $\gamma_A^{(s,t)} = \text{Tr}[\rho_A^2]$: purity of subsystem $A$, measuring coherence
    \item $L_A^{(s,t)} = 1 - \gamma_A^{(s,t)}$: linear entropy, quantifying mixedness
    \item $I(A:B)^{(s,t)} = S_A^{(s,t)} + S_B^{(s,t)} - S_{AB}^{(s,t)}$: quantum mutual information measuring total correlations
\end{itemize}

\begin{remark}[Pure State Relations]
For pure states $|\psi_s(t)\rangle$ of the composite system $AB$, we have $S_{AB} = 0$ and $S_A = S_B$, yielding:
\begin{align}
    I(A:B)^{(s,t)} &= 2S_A^{(s,t)} \label{eq:mutual_info_pure}\\
    L_A^{(s,t)} &= 1 - \gamma_A^{(s,t)} \label{eq:linear_entropy_purity}
\end{align}
\end{remark}

\subsection{Potential Outcomes Framework}

We adopt the Neyman-Rubin potential outcomes framework extended to mediation analysis.
\begin{definition}[Potential Mediators]
For each sample $s \in \mathcal{S}$ and treatment $t \in \mathcal{T}$, $\bm{M}_s(t)$ denotes the potential mediator vector that would be observed under architecture $t$, regardless of the actual assigned treatment.
\end{definition}

\begin{definition}[Potential Outcomes]
For each sample $s \in \mathcal{S}$, treatment $t \in \mathcal{T}$, and mediator value $\bm{m} \in \mathcal{M}$, $Y_s(t, \bm{m})$ denotes the potential outcome (predictive performance) that would be observed if architecture $t$ were assigned and mediators were set to $\bm{m}$.
\end{definition}

\begin{definition}[Observed Quantities]
Under consistency (Assumption~\ref{assum:consistency}), the observed mediators and outcomes are:
\begin{align}
    \bm{M}_{s,t} &= \bm{M}_s(t) \quad \text{ when treatment } t \text{ is assigned} \\
    Y_{s,t} &= Y_s(t, \bm{M}_s(t)) \equiv Y_s(t) \quad \text{when treatment } t \text{ is assigned}
\end{align}
\end{definition}

\subsection{Experimental Design}
The paired experimental design yields the dataset:
\begin{equation}
    \mathcal{D} = \{(\bm{M}_{s,0}, Y_{s,0}, \bm{M}_{s,1}, Y_{s,1})\}_{s=1}^{N_{\text{test}}}
\end{equation}
where each test sample contributes observations under both architectural regimes.

\begin{definition}[Sample-Level Total Effect]
\begin{equation}
   \Delta Y_s \equiv Y_{s,1} - Y_{s,0} = Y_s(1) - Y_s(0) 
\end{equation}

\end{definition}

\begin{definition}[Average Total Effect (ATE)]
\begin{equation}
    \text{ATE} \equiv \mathbb{E}[\Delta Y_s] = \frac{1}{N_{\text{test}}} \sum_{s=1}^{N_{\text{test}}} (Y_{s,1} - Y_{s,0})
\end{equation}
\end{definition}

\section{Structural Causal Model}

\subsection{Linear Structural Equations}

We specify the data generating process through linear structural equations:

\begin{align}
    \bm{M}_s(t) &= \bm{\alpha}_0 + \bm{\alpha} t + \bm{\epsilon}_s^{(M)} \label{eq:sem_mediator}\\
    Y_s(t, \bm{m}) &= \tau_0 + \tau t + \bm{\beta}^\top \bm{m} + \epsilon_s^{(Y)} \label{eq:sem_outcome}
\end{align}

\begin{definition}[Structural Parameters] We define : 

\begin{itemize}
    \item $\bm{\alpha}_0 \in \mathbb{R}^4$: baseline mediator values under $t=0$
    \item $\bm{\alpha} = (\alpha_{S_A}, \alpha_{\gamma_A}, \alpha_{L_A}, \alpha_{I(A:B)})^\top \in \mathbb{R}^4$: causal effect of architecture on each mediator
    \item $\tau_0 \in \mathbb{R}$: baseline outcome under $t=0$ with zero mediators
    \item $\tau \in \mathbb{R}$: direct causal effect of architecture on outcome, not mediated by $\bm{M}$
    \item $\bm{\beta} = (\beta_{S_A}, \beta_{\gamma_A}, \beta_{L_A}, \beta_{I(A:B)})^\top \in \mathbb{R}^4$: causal effect of each mediator on outcome, conditional on architecture
\end{itemize}
\end{definition}

\begin{definition}[Error Structure]
The error terms satisfy:
\begin{align}
    \mathbb{E}[\bm{\epsilon}_s^{(M)}] &= \bm{0} \quad \forall s \in \mathcal{S}\label{eq:errorM} \\
    \mathbb{E}[\epsilon_s^{(Y)}] &= 0 \quad \forall s \in \mathcal{S} \label{errorY}\\
    \mathbb{E}[\bm{\epsilon}_s^{(M)} {\bm{\epsilon}_{s'}}^{(M)\top}] &= \bm{\Sigma}_M \mathbb{1}_{s=s'} \quad \text{(within-sample correlation only)} \\
    \mathbb{E}[\epsilon_s^{(Y)} \epsilon_{s'}^{(Y)}] &= \sigma_Y^2 \mathbb{1}_{s=s'} \quad \text{(within-sample correlation only)}
\end{align}
where $\bm{\Sigma}_M \succ 0$ and $\sigma_Y^2 > 0$ are finite.
\end{definition}

\begin{remark}[Paired Design Correlation]
The errors $\epsilon_s^{(M)}$ and $\epsilon_s^{(Y)}$ are allowed to be correlated within the same sample $s$ across different treatments, reflecting sample-specific heterogeneity. This correlation is explicitly accounted for through cluster-robust inference (Section~\ref{sec:inference}).
\end{remark}

\subsection{Causal Effects Definitions}

\begin{definition}[Natural Direct Effect (NDE)]
The sample-level natural direct effect is:
\begin{equation}
    \text{NDE}_s \equiv Y_s(1, \bm{M}_s(0)) - Y_s(0, \bm{M}_s(0))
\end{equation}
This quantifies the effect of changing architecture from $t=0$ to $t=1$ while holding mediators at their baseline values $\bm{M}_s(0)$. The average natural direct effect is:

\begin{equation}
    \text{NDE} \equiv \mathbb{E}_s[\text{NDE}_s] = \frac{1}{N_{\text{test}}} \sum_{s=1}^{N_{\text{test}}} \text{NDE}_s
\end{equation}

\end{definition}

\begin{definition}[Natural Indirect Effect (NIE)]
The sample-level natural indirect effect is:
\begin{equation}
    \text{NIE}_s \equiv Y_s(1, \bm{M}_s(1)) - Y_s(1, \bm{M}_s(0))
\end{equation}
This quantifies the effect of changing mediators from $\bm{M}_s(0)$ to $\bm{M}_s(1)$ while holding architecture at $t=1$. The average natural indirect effect is:
\begin{equation}
    \text{NIE} \equiv \mathbb{E}_s[\text{NIE}_s] = \frac{1}{N_{\text{test}}} \sum_{s=1}^{N_{\text{test}}} \text{NIE}_s
\end{equation}

\end{definition}

\section{Identification Assumptions}
We restate formally the identifying assumptions introduced in Section~\ref{subsec:assumptions}

\begin{assumption}[Sequential Ignorability]\label{assum:sequential_ignorability}
For all $s \in \mathcal{S}$, $t, t' \in \mathcal{T}$, and $\bm{m} \in \mathcal{M}$:
\begin{enumerate}
    \item Treatment ignorability: 
    \begin{equation}
         \{Y_s(t', \bm{m}), \bm{M}_s(t)\} \perp t
    \end{equation}
   
    \item Conditional mediator ignorability:
    \begin{equation}
        Y_s(t', \bm{m}) \perp \bm{M}_s(t) \mid t
    \end{equation}
\end{enumerate}
\end{assumption}

\begin{remark}[Satisfaction by Design] We consider : 
\begin{itemize}
    \item \textbf{Condition (1)} holds because treatment $t$ is deterministically assigned by the experimenter. The shallow ($t=0$) and deep ($t=1$) architectures are trained independently with no common causes affecting both treatment assignment and potential outcomes. 
    \item \textbf{Condition (2)} holds because mediators $\bm{M}_{s,t}$ are computed post-hoc from the trained quantum state $|\psi_s(t)\rangle$ via deterministic functions. Once architecture $t$ is fixed, mediator values are fully determined by the trained parameters $\theta_t^*$ and input data $x_s$. No unmeasured confounding process simultaneously affects mediators and outcomes conditional on architecture.
\end{itemize}
\end{remark}

\begin{assumption}[No Treatment-Mediator Interaction]\label{assum:no_interaction}  For all $s \in \mathcal{S}$; $t \in \mathcal{T}$; $\bm{m}, \bm{m}' \in \mathcal{M}$:
\begin{equation}
    \mathbb{E}[Y_s(t, \bm{m}) - Y_s(t, \bm{m}')] = \bm{\beta}^\top(\bm{m} - \bm{m}')
\end{equation}
That is, the effect of mediators on outcomes does not depend on the treatment level.
\end{assumption}

\begin{remark}[Empirical Testability]
This assumption is encoded in Equation~\eqref{eq:sem_outcome} by the absence of interaction terms $t \times \bm{m}$. It can be empirically tested by augmenting the outcome model with interaction terms and testing their joint significance.
\end{remark}

\begin{assumption}[Consistency and SUTVA]\label{assum:consistency}
We defined: 

\begin{enumerate}
    \item Consistency: For all $s \in \mathcal{S}$ and $t \in \mathcal{T}$:
    \begin{equation}
        \bm{M}_{s,t} = \bm{M}_s(t), \quad Y_{s,t} = Y_s(t)
    \end{equation}
    when treatment $t$ is assigned to sample $s$.
    
    \item No interference: The potential outcomes for sample $s$ do not depend on treatments assigned to other samples $s' \neq s$.
    
    \item No hidden versions: There is a single, well-defined version of each treatment level $t \in \{0,1\}$.
\end{enumerate}
\end{assumption}

\begin{remark}[Satisfaction by Construction]
All components are satisfied in our simulation-based evaluation:
\begin{itemize}
    \item Consistency holds by deterministic computation
    \item No interference: models are trained and evaluated independently
    \item No hidden versions: architectures are precisely specified
\end{itemize}
\end{remark}

\section{Main Theoretical Results}

\subsection{Causal Decomposition Theorem \ref{thm:decomp}}
Under the linear structural model defined by Equations~\eqref{eq:sem_mediator}--\eqref{eq:sem_outcome} and Assumptions~\ref{assum:sequential_ignorability}--\ref{assum:consistency}, the Average Total Effect admits the unique additive decomposition:
\begin{equation}
    \boxed{\text{ATE} = \tau + \bm{\beta}^\top\bm{\alpha} = \underbrace{\tau}_{\substack{\text{Direct Effect} \\ \text{NDE}}} + \underbrace{\sum_{k=1}^{K} \alpha_k\beta_k}_{\substack{\text{Indirect Effect} \\ \text{NIE}}}}
\end{equation}
where:
\begin{itemize}
    \item $\tau$ quantifies the direct architectural effect on performance, not mediated by quantum resources
    \item $\alpha_k\beta_k$ quantifies the contribution mediated through the $k$-th quantum property
    \item $\text{NDE} = \tau$ and $\text{NIE} = \bm{\beta}^\top\bm{\alpha}$
\end{itemize}

\begin{proof} By definition:
\begin{align}
\text{NDE} &= \mathbb{E}[Y_s(1, \bm{M}_s(0)) - Y_s(0, \bm{M}_s(0))] \\
&= \mathbb{E}\left[\left(\tau_0 + \tau \cdot 1 + \bm{\beta}^\top \bm{M}_s(0) + \epsilon_s^{(Y)}\right) - \left(\tau_0 + \tau \cdot 0 + \bm{\beta}^\top \bm{M}_s(0) + \epsilon_s^{(Y)}\right)\right] \\
&= \mathbb{E}[\tau] \\
&= \tau
\end{align}
where the third equality uses $\mathbb{E}[\epsilon_s^{(Y)}] = 0$ and linearity of expectation. From Equation~\eqref{eq:sem_mediator}:
\begin{align}
\bm{M}_s(1) - \bm{M}_s(0) &= (\bm{\alpha}_0 + \bm{\alpha} \cdot 1 + \bm{\epsilon}_s^{(M)}) - (\bm{\alpha}_0 + \bm{\alpha} \cdot 0 + \bm{\epsilon}_s^{(M)}) \\
&= \bm{\alpha}
\end{align}
Note that the error $\bm{\epsilon}_s^{(M)}$ is sample-specific but treatment-invariant, hence cancels in the difference. By definition and Assumption~\ref{assum:no_interaction}:
\begin{align}
\text{NIE} &= \mathbb{E}[Y_s(1, \bm{M}_s(1)) - Y_s(1, \bm{M}_s(0))] \\
&= \mathbb{E}\left[\bm{\beta}^\top (\bm{M}_s(1) - \bm{M}_s(0))\right] \quad \text{(by Assumption~\ref{assum:no_interaction})} \\
&= \bm{\beta}^\top \mathbb{E}[\bm{M}_s(1) - \bm{M}_s(0)] \quad \text{(linearity of expectation)} \\
&= \bm{\beta}^\top \bm{\alpha} \quad \\
&= \sum_{k=1}^{K} \alpha_k \beta_k
\end{align}
Under consistency (Assumption~\ref{assum:consistency}):
\begin{align}
\text{ATE} &= \mathbb{E}[Y_s(1) - Y_s(0)] \\
&= \mathbb{E}[Y_s(1, \bm{M}_s(1)) - Y_s(0, \bm{M}_s(0))]
\end{align}
Adding and subtracting the counterfactual term $Y_s(1, \bm{M}_s(0))$:
\begin{align}
\text{ATE} &= \mathbb{E}[Y_s(1, \bm{M}_s(1)) - Y_s(1, \bm{M}_s(0))] + \mathbb{E}[Y_s(1, \bm{M}_s(0)) - Y_s(0, \bm{M}_s(0))] \\
&= \text{NDE} + \text{NIE}\\
&=  \tau + \bm{\beta}^\top\bm{\alpha} \quad 
\end{align}
\end{proof}

\subsection{Identifiability Under Pure States}
\begin{lemma}[Mediator Rank Deficiency]\label{lem:rank_deficiency}
For pure quantum states, the mediator matrix has rank 2:
\begin{equation}
    \text{rank}(\bm{M}) = 2 \quad \text{due to } \begin{cases}
I(A:B) = 2S_A \\
L_A = 1 - \gamma_A
\end{cases}
\end{equation}
\end{lemma}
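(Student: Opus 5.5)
The plan is to read $\bm{M}$ as the stacked mediator design matrix. It has $2N_{\text{test}}$ rows, indexed by pairs $(s,t)$, and four columns $(S_A,\gamma_A,L_A,I(A:B))$. I will prove $\operatorname{rank}(\bm{M})=2$ in two halves, an upper bound and a lower bound. One point must be fixed first. The relation $L_A=1-\gamma_A$ is \emph{affine}, not linear. Without centering, the column $L_A$ equals $\mathbf{1}-\gamma_A$, and the raw matrix can have rank $3$, because its column space lies in $\operatorname{span}\{S_A,\gamma_A,\mathbf{1}\}$. The statement is therefore only correct in the sense relevant to the OLS fit of Eq.~\eqref{eq:sem_outcome}, which includes the intercept $\tau_0$. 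Accordingly I will take $\bm{M}$ to be the column-centered matrix $\tilde{\bm{M}}=(I-\tfrac{1}{2N_{\text{test}}}\mathbf{1}\mathbf{1}^\top)\bm{M}$. Equivalently, I will state the result as $\operatorname{rank}[\mathbf{1},\bm{M}]=3$, i.e.\ rank $2$ modulo the intercept.

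For the upper bound, I use the Pure State Relations remark: $I(A:B)^{(s,t)}=2S_A^{(s,t)}$ and $L_A^{(s,t)}=1-\gamma_A^{(s,t)}$ for every row. The paper also verifies numerically that $S_{AB}<10^{-10}$. After centering, these relations give two linearly independent null vectors of $\tilde{\bm{M}}$:
\begin{equation*}
v_1=(2,0,0,-1)^\top,\qquad v_2=(0,1,1,0)^\top,
\end{equation*}
since the constant $1$ is removed by centering. Rank–nullity then gives $\operatorname{rank}(\tilde{\bm{M}})\le 4-2=2$. A consequence for identifiability: $\beta$ is identified only modulo $\operatorname{span}\{v_1,v_2\}$. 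However, $\alpha$ inherits the same relations, $\alpha_{I}=2\alpha_{S_A}$ and $\alpha_{L_A}=-\alpha_{\gamma_A}$, so $\alpha\perp v_1,v_2$. Hence $\beta^\top\alpha$ in Theorem~\ref{thm:decomp} is still identified, even though the individual products $\alpha_k\beta_k$ entering the MAMC are not. I would record this as a remark after the lemma.

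For the lower bound, I need the two centered columns $\tilde S_A$ and $\tilde\gamma_A$ to be linearly independent. Equivalently, the points $(S_A^{(s,t)},\gamma_A^{(s,t)})$ must not all lie on one affine line, and neither column may be constant. This is the main obstacle, because it is a genericity statement about the data rather than an identity. My approach is to use the spectrum $\lambda$ of $\rho_A$, a $4\times4$ matrix since $|A|=2$. Here $S_A=H(\lambda)$ (Shannon entropy) and $\gamma_A=\sum_i\lambda_i^2$. Restrict to any one-parameter family of spectra, for instance rank-two spectra $(p,1-p,0,0)$. Along such a family, $\gamma_A$ is a strictly monotone function of $S_A$ on $p\in[0,\tfrac12]$, and the curve is strictly non-affine. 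So any three rows with distinct entropies that lie on this curve already force rank $2$. For general spectra the map $\lambda\mapsto(H(\lambda),\sum\lambda_i^2)$ has a two-dimensional image, which makes affine collinearity even less likely.

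The remaining step is to argue that the trained circuits actually produce at least three samples with distinct $S_A$ values. For generic inputs $x_s$, the encoded states vary with $s$, and degenerate cases (all states product, or all with the same spectrum) form a measure-zero set. I would therefore state the lower bound as holding for generic data, and note that it was confirmed numerically in the experiments.
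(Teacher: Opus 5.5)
Your upper bound is correct and uses the same idea as the paper, which offers nothing beyond the clause ``due to $I(A{:}B)=2S_A$, $L_A=1-\gamma_A$''. Your version is the correct form of that argument. Because $L_A=1-\gamma_A$ is affine, the raw column space is $\operatorname{span}\{S_A,\gamma_A,\mathbf{1}\}$, so the uncentered $\bm{M}$ generically has rank $3$. Read literally, the lemma is therefore false. What the OLS fit with intercept needs is $\operatorname{rank}(\tilde{\bm{M}})\le 2$, equivalently $\operatorname{rank}[\mathbf{1},\bm{M}]\le 3$, which is what you prove. Your remark that $\beta$ is identified only modulo $\operatorname{span}\{v_1,v_2\}$, while $\alpha\perp v_1,v_2$, is also right. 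It is sharper than the proof of Proposition~\ref{prop:identifiable_indirect}, which asserts the constraints hold ``similarly for $\beta$''. That proof also overlooks that the individual $\alpha_k\beta_k$, and hence MAMC and RQC, are not identified.

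The gap is the lower bound. The pure-state relations say nothing about it, and your genericity argument fails on one of the paper's own topologies. In the pairwise circuit every gate acts inside $\{0,1\}$ or inside $\{2,3\}$: the encoding $R_Y(x_i)$, the rotations, and the CNOTs on $(0,1)$ and $(2,3)$. The Methods take $A=\{0,1\}$. Hence $|\psi_s(t)\rangle$ is a product state across $A|B$ for every input, every trained $\theta_t^*$, and both values of $t$. This gives $S_A\equiv 0$, $I(A{:}B)\equiv 0$, $\gamma_A\equiv 1$, $L_A\equiv 0$, and $\tilde{\bm{M}}=0$. The tables accordingly report vanishing $S_A$ and $I(A{:}B)$ contributions in every pairwise row. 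This degeneracy comes from the architecture; it is not a measure-zero event in $x_s$.

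What can actually be proved is that $\operatorname{rank}(\tilde{\bm{M}})\le 2$ always, with equality if and only if the pooled points $(S_A^{(s,t)},\gamma_A^{(s,t)})$ do not all lie on one affine line. That condition must be stated as a hypothesis and checked for each configuration; it cannot be derived. Your spectral argument does not supply it in general. It covers only rows whose $\rho_A$ has rank at most two, which the data need not satisfy. Even there, you need strict concavity rather than mere non-affineness to conclude that three distinct points are non-collinear. Strict concavity does hold: along $(p,1-p,0,0)$ one has $d\gamma_A/dS_A=-\ln 2\cdot u/\operatorname{artanh}u$ with $u=1-2p$, which is decreasing in $S_A$.

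Appealing to the numerical check does not help either. A correlation or PCA diagnostic cannot certify the exact rank of each configuration's design matrix. Moreover, the $\gamma_A$--$L_A$ correlation of $-0.91$ reported in Supplementary Fig.~\ref{fig17} is incompatible with the exact identity $L_A=1-\gamma_A$, which forces $r=-1$.
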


\begin{proposition}[Identifiability of Total Indirect Effect]\label{prop:identifiable_indirect}
Despite the rank deficiency in Lemma~\ref{lem:rank_deficiency}, the total indirect effect $\text{NIE} = \bm{\beta}^\top\bm{\alpha}$ is uniquely identified and invariant to choice of mediator basis.
\end{proposition}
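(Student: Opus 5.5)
The plan is to make the rank deficiency of Lemma~\ref{lem:rank_deficiency} explicit as an affine parametrization of the mediator vector, and then show that $\bm{\beta}^\top\bm{\alpha}$ depends on $\bm{\beta}$ only through a reduced, identified coefficient vector.

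\textbf{Step 1: affine reduction.} For pure states, Lemma~\ref{lem:rank_deficiency} gives $I(A:B)=2S_A$ and $L_A=1-\gamma_A$. Hence every mediator vector can be written as
\begin{equation*}
\bm{M}_s(t)=\bm{c}+\bm{A}\,\bm{z}_s(t),\qquad \bm{z}_s(t)=\begin{pmatrix}S_A^{(s,t)}\\ \gamma_A^{(s,t)}\end{pmatrix},\quad \bm{c}=\begin{pmatrix}0\\0\\1\\0\end{pmatrix},\quad \bm{A}=\begin{pmatrix}1&0\\0&1\\0&-1\\2&0\end{pmatrix}.
\end{equation*}
Substituting this into Eq.~\eqref{eq:sem_mediator} forces $\bm{\alpha}=\bm{A}\bm{\alpha}_z$ with $\bm{\alpha}_z=(\alpha_{S_A},\alpha_{\gamma_A})^\top$. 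That is, $\alpha_{I(A:B)}=2\alpha_{S_A}$ and $\alpha_{L_A}=-\alpha_{\gamma_A}$. The vector $\bm{\alpha}_z$ is identified as the mean paired difference of $\bm{z}_s$.

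Substituting the same parametrization into Eq.~\eqref{eq:sem_outcome} gives
\begin{equation*}
Y_s(t,\bm{M}_s(t))=(\tau_0+\bm{\beta}^\top\bm{c})+\tau t+\bm{\eta}^\top\bm{z}_s(t)+\epsilon_s^{(Y)},\qquad \bm{\eta}=\bm{A}^\top\bm{\beta}.
\end{equation*}
So the data inform $\bm{\beta}$ only through $\bm{\eta}$. The intercept absorbs $\bm{\beta}^\top\bm{c}$, and any $\bm{\beta}+\bm{v}$ with $\bm{v}\in\ker\bm{A}^\top$ is observationally equivalent to $\bm{\beta}$. This is the precise meaning of "$\bm{\beta}$ is not identified."

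\textbf{Step 2: identification of the reduced regression.} I will argue that $(\tau_0+\bm{\beta}^\top\bm{c},\tau,\bm{\eta})$ is identified by OLS on the reduced regressor set $(1,t,S_A,\gamma_A)$. The only requirement is that this $2N_{\text{test}}\times 4$ design has full column rank. This holds unless $S_A$ and $\gamma_A$ are affinely dependent on $t$ across the pooled sample.

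This nondegeneracy is the step I expect to be the main obstacle. $S_A$ and $\gamma_A$ are both spectral functions of $\rho_A$, but they are not affinely related in general. For a two-qubit subsystem $\rho_A$ has up to four eigenvalues, so von Neumann entropy and purity vary independently. I would state full rank of the reduced design as a generic condition and check it empirically, rather than try to prove it in general.

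\textbf{Step 3: the indirect effect is identified.} Combining Steps 1 and 2,
\begin{equation*}
\bm{\beta}^\top\bm{\alpha}=\bm{\beta}^\top\bm{A}\bm{\alpha}_z=(\bm{A}^\top\bm{\beta})^\top\bm{\alpha}_z=\bm{\eta}^\top\bm{\alpha}_z,
\end{equation*}
which involves only identified quantities. Any $\bm{v}\in\ker\bm{A}^\top$ contributes $\bm{v}^\top\bm{A}\bm{\alpha}_z=0$, so every observationally equivalent $\bm{\beta}$ yields the same NIE. By the decomposition theorem (Theorem~\ref{thm:decomp}), $\text{NDE}=\tau$ is also identified, and the sum recovers the ATE.

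\textbf{Step 4: basis invariance.} Take any alternative non-degenerate mediator basis $\bm{z}'=\bm{G}\bm{z}+\bm{d}$ with $\bm{G}$ invertible, for example $(I(A:B),L_A)$ in place of $(S_A,\gamma_A)$. Under this change the mediator effect transforms as $\bm{\alpha}_z'=\bm{G}\bm{\alpha}_z$, and the reduced outcome coefficient transforms as $\bm{\eta}'=\bm{G}^{-\top}\bm{\eta}$, with $\bm{d}$ absorbed into the intercept. Therefore $\bm{\eta}'^\top\bm{\alpha}_z'=\bm{\eta}^\top\bm{\alpha}_z$.

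The same argument applies to minimum-norm or pseudo-inverse solutions of the full rank-deficient regression. Every least-squares solution $\hat{\bm{\beta}}$ satisfies $\bm{A}^\top\hat{\bm{\beta}}=\hat{\bm{\eta}}$, so all of them produce the same $\hat{\bm{\beta}}^\top\hat{\bm{\alpha}}$. This establishes both uniqueness of the total indirect effect and its invariance to the choice of mediator basis.
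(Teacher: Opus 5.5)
Your proof is correct and takes the same route as the paper: both substitute the pure-state constraints $\alpha_{I(A:B)}=2\alpha_{S_A}$ and $\alpha_{L_A}=-\alpha_{\gamma_A}$, then regroup $\bm{\beta}^\top\bm{\alpha}$ onto the independent mediators $(S_A,\gamma_A)$, and your $\bm{\eta}=\bm{A}^\top\bm{\beta}=(\beta_{S_A}+2\beta_{I(A:B)},\,\beta_{\gamma_A}-\beta_{L_A})^\top$ is exactly the paper's pair of composite coefficients. Your version is more complete than the paper's: you show that only $\bm{\eta}$ is identified, not $\bm{\beta}$ (which the paper loosely says is ``similarly'' constrained), that the $\ker\bm{A}^\top$ ambiguity drops out of the NIE, that full column rank of the reduced design $(1,t,S_A,\gamma_A)$ is required, and you verify invariance under $\bm{z}'=\bm{G}\bm{z}+\bm{d}$ where the paper only asserts it.
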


\begin{proof}
The total indirect effect depends only on the sum $\sum_{k=1}^K \alpha_k\beta_k$, not on individual coefficients. Under the linear constraints:
\begin{align}
\alpha_{I(A:B)} &= 2\alpha_{S_A} \\
\alpha_{L_A} &= -\alpha_{\gamma_A}
\end{align}
and similarly for $\beta$, we have:
\begin{align}
\sum_{k=1}^4 \alpha_k\beta_k &= \alpha_{S_A}\beta_{S_A} + \alpha_{\gamma_A}\beta_{\gamma_A} + \alpha_{L_A}\beta_{L_A} + \alpha_{I(A:B)}\beta_{I(A:B)} \\
&= \alpha_{S_A}\beta_{S_A} + \alpha_{\gamma_A}\beta_{\gamma_A} + (-\alpha_{\gamma_A})\beta_{L_A} + (2\alpha_{S_A})\beta_{I(A:B)} \\
&= \alpha_{S_A}(\beta_{S_A} + 2\beta_{I(A:B)}) + \alpha_{\gamma_A}(\beta_{\gamma_A} - \beta_{L_A})
\end{align}
This can be re-parameterized using only independent mediators $(S_A, \gamma_A)$ with composite coefficients, yielding the same total indirect effect.
\end{proof}

\section{Statistical Estimation and Inference}\label{sec:inference}

\subsection{Ordinary Least Squares Estimation}

\begin{definition}[Stacked Dataset]
Define the pooled dataset:
\begin{equation}
    \mathcal{D}_{\text{pooled}} = \{(t, \bm{M}_{s,t}, Y_{s,t}) : s \in \mathcal{S}, t \in \mathcal{T}\}
\end{equation}
with $|\mathcal{D}_{\text{pooled}}| = 2N_{\text{test}}$ observations.
\end{definition}

\begin{definition}[OLS Estimators] We define : 
\begin{enumerate}
    \item Mediator coefficients:
    \begin{equation}
        \hat{\bm{\alpha}} = \frac{1}{N_{\text{test}}} \sum_{s=1}^{N_{\text{test}}} (\bm{M}_{s,1} - \bm{M}_{s,0})
    \end{equation}
    \item Outcome coefficients:
    \begin{equation}
        (\hat{\tau}_0, \hat{\tau}, \hat{\bm{\beta}}) = \arg\min_{\tau_0, \tau, \bm{\beta}} \sum_{s=1}^{N_{\text{test}}} \sum_{t \in \{0,1\}} \left[Y_{s,t} - \tau_0 - \tau t - \bm{\beta}^\top \bm{M}_{s,t}\right]^2
    \end{equation}
\end{enumerate}
\end{definition}

\begin{theorem}[Consistency of OLS Estimators]
Under Assumptions~\ref{assum:sequential_ignorability}--\ref{assum:consistency} and standard regularity conditions:
\begin{equation}
    (\hat{\tau}, \hat{\bm{\beta}}, \hat{\bm{\alpha}}) \xrightarrow{p} (\tau, \bm{\beta}, \bm{\alpha}) \quad \text{as } N_{\text{test}} \to \infty
\end{equation}
\end{theorem}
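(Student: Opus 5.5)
The plan is to treat the two blocks of estimators separately, because $\hat{\bm{\alpha}}$ is a simple sample mean while $(\hat{\tau}_0,\hat{\tau},\hat{\bm{\beta}})$ come from a pooled regression with clustered errors. Throughout, the test samples $s$ are treated as i.i.d.\ clusters of two correlated rows. The \emph{standard regularity conditions} are taken to be: finite second moments of $(\bm{M}_{s,t},Y_{s,t})$, and nonsingularity of the population design matrix $Q=\mathbb{E}\bigl[\sum_{t}X_{s,t}X_{s,t}^\top\bigr]$ with $X_{s,t}=(1,t,\bm{M}_{s,t}^\top)^\top$.

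\textbf{Step 1: the mediator coefficients.} From Eq.~\eqref{eq:sem_mediator} and consistency (Assumption~\ref{assum:consistency}), each paired difference satisfies
\begin{equation*}
\bm{M}_{s,1}-\bm{M}_{s,0}=\bm{\alpha}+\bigl(\bm{\epsilon}^{(M)}_{s,1}-\bm{\epsilon}^{(M)}_{s,0}\bigr).
\end{equation*}
If the error is treatment-invariant, as in the proof of Theorem~\ref{thm:decomp}, the bracket vanishes. Otherwise it has mean zero by Eq.~\eqref{eq:errorM}. In either case the differences are i.i.d.\ over $s$ with finite variance, so the weak law of large numbers gives $\hat{\bm{\alpha}}\xrightarrow{p}\bm{\alpha}$.

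\textbf{Step 2: the outcome coefficients.} Stack the $2N_{\text{test}}$ rows and write the usual sampling-error form
\begin{equation*}
\hat{\theta}-\theta=\Bigl(\tfrac{1}{N_{\text{test}}}X^\top X\Bigr)^{-1}\Bigl(\tfrac{1}{N_{\text{test}}}\textstyle\sum_{s}X_s^\top \epsilon^{(Y)}_s\Bigr),
\end{equation*}
where $\theta=(\tau_0,\tau,\bm{\beta})$.
\begin{itemize}
\item By the law of large numbers over clusters, $N_{\text{test}}^{-1}X^\top X\to Q$.
\item Invertibility of $Q$ and the continuous mapping theorem then handle the inverse.
\item By the law of large numbers applied to the cluster sums $X_s^\top\epsilon^{(Y)}_s$, which are i.i.d.\ with finite mean, the second factor converges to $\mathbb{E}[X_s^\top\epsilon^{(Y)}_s]$.
\end{itemize}
Consistency therefore reduces to the orthogonality condition $\mathbb{E}[X_s^\top\epsilon^{(Y)}_s]=0$. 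The components $\mathbb{E}[\epsilon^{(Y)}_s]=0$ and $\mathbb{E}[t\,\epsilon^{(Y)}_s]=0$ follow from Eq.~\eqref{errorY} and treatment ignorability (Assumption~\ref{assum:sequential_ignorability}(1)), since $t$ is fixed by design. The component $\mathbb{E}[\bm{M}_{s,t}\epsilon^{(Y)}_s]=0$ is where conditional mediator ignorability (Assumption~\ref{assum:sequential_ignorability}(2)) is used. Under it, $\bm{M}_s(t)$ is independent of $Y_s(t',\bm{m})$ given $t$, and hence independent of $\epsilon^{(Y)}_s=Y_s(t',\bm{m})-\tau_0-\tau t'-\bm{\beta}^\top\bm{m}$. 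The within-cluster correlation between the $t=0$ and $t=1$ rows does not affect this argument; it only changes the variance, which is what the cluster-robust estimator addresses.

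\textbf{Main obstacle: identification.} I expect the principal difficulty to lie in identification, not in the limit theory. By Lemma~\ref{lem:rank_deficiency}, under pure states the regressors satisfy $I(A:B)=2S_A$ and $L_A=1-\gamma_A$. The second relation makes $L_A$ collinear with the intercept and $\gamma_A$, so $Q$ is singular and the four components of $\bm{\beta}$ are not individually identified. This means the nonsingularity condition in Step 2 fails literally. I would repair it in three moves:
\begin{enumerate}
\item Following Proposition~\ref{prop:identifiable_indirect}, reparameterize onto the independent mediators $(S_A,\gamma_A)$, with composite coefficients $\tilde\beta_{S}=\beta_{S_A}+2\beta_{I(A:B)}$ and $\tilde\beta_{\gamma}=\beta_{\gamma_A}-\beta_{L_A}$ (the intercept absorbs $\beta_{L_A}$).
\item Assume $Q$ is nonsingular in this reduced basis, which requires nondegenerate variation of $S_A$ and $\gamma_A$ across samples. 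Steps 1--2 then give consistency of $(\hat\tau,\hat{\tilde{\bm{\beta}}},\hat{\bm{\alpha}})$. If OLS is computed in the redundant basis with a pseudoinverse, the fitted values, and hence these estimates, coincide with the reduced fit.
\item The continuous mapping theorem then gives $\hat{\bm{\beta}}^\top\hat{\bm{\alpha}}\xrightarrow{p}\bm{\beta}^\top\bm{\alpha}$, since the products depend only on the identified combinations.
\end{enumerate}
The statement as written should thus be read either under a full-rank regularity condition or as consistency for the identified reparameterization. I would make this explicit in the proof.
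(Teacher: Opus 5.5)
Your proposal is correct and takes the same route as the paper. The paper's proof is a one-line appeal to standard OLS consistency under correct specification, with exogeneity supplied by Assumption~\ref{assum:sequential_ignorability}; you make explicit the law-of-large-numbers, nonsingularity and orthogonality steps that the paper leaves implicit.

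Your identification caveat is also correct, and the paper does not address it. Under the structural model the required rank condition is essentially the stated $\bm{\Sigma}_M \succ 0$, which Lemma~\ref{lem:rank_deficiency} rules out for pure states. So $\hat{\tau}$ and $\hat{\bm{\alpha}}$ remain consistent, while $\bm{\beta}$ is identified only through the composite coefficients and $\bm{\beta}^\top\bm{\alpha}$, exactly as you propose.
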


\begin{proof}
Standard result from linear regression theory under correct specification and exogeneity (guaranteed by Assumption~\ref{assum:sequential_ignorability}).
\end{proof}

\subsection{Cluster-Robust Inference}

\begin{definition}[Cluster-Robust Standard Errors]
To account for within-sample correlation, we compute Eicker-Huber-White cluster-robust standard errors \cite{cameron2015practitioner}, clustering at the sample level:
\begin{equation}
    \widehat{\text{Var}}(\hat{\bm{\theta}}) = (\bm{X}^\top\bm{X})^{-1} \left(\sum_{s=1}^{N_{\text{test}}} \bm{X}_s^\top \hat{\bm{u}}_s \hat{\bm{u}}_s^\top \bm{X}_s\right) (\bm{X}^\top\bm{X})^{-1}
\end{equation}
where $\bm{X}_s$ contains both rows $(t=0, t=1)$ for sample $s$, and $\hat{\bm{u}}_s$ are corresponding residuals.
\end{definition}

\subsection{Bootstrap Confidence Intervals}

\begin{definition}[Nonparametric Percentile Bootstrap]
For indirect effects $\alpha_k\beta_k$ (which are products of parameters):
\begin{enumerate}
    \item[--] Resample $s^* \in \{1, \ldots, N_{\text{test}}\}$ with replacement, $B=2000$ times
    \item[--] For each bootstrap sample $b$, estimate $(\hat{\bm{\alpha}}^{(b)}, \hat{\bm{\beta}}^{(b)})$
    \item[--] Compute $\hat{\theta}_k^{(b)} = \hat{\alpha}_k^{(b)} \hat{\beta}_k^{(b)}$
    \item[--] $95\%$ CI: $[\hat{\theta}_k^{(0.025)}, \hat{\theta}_k^{(0.975)}]$ using empirical quantiles
\end{enumerate}
\end{definition}

\subsection{Model Validation}

\begin{definition}[Relative Consistency Error]
\begin{equation}
    \epsilon_{\text{rel}} = \frac{|\widehat{\text{ATE}} - (\hat{\tau} + \hat{\bm{\beta}}^\top\hat{\bm{\alpha}})|}{|\widehat{\text{ATE}}| + 10^{-12}}
\end{equation}
where $\widehat{\text{ATE}} = N_{\text{test}}^{-1} \sum_s (Y_{s,1} - Y_{s,0})$ is the empirical total effect.
\end{definition}

\begin{proposition}[Goodness-of-Fit Criterion]
We accept the linear model if $\epsilon_{\text{rel}} < 0.05$ (5\% tolerance), indicating that the regression-based decomposition accurately reconstructs the observed total effect.
\end{proposition}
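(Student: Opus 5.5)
The statement is a decision rule rather than an assertion, so what needs proving is that the rule is well posed. I would show that $\epsilon_{\text{rel}}$ measures exactly what it is claimed to measure: the gap between $\widehat{\text{ATE}}$ and the plug-in decomposition $\hat{\tau}+\hat{\bm{\beta}}^\top\hat{\bm{\alpha}}$ from Theorem~\ref{thm:decomp}. The plan is to derive the OLS normal equations for the pooled regression defined above. I would then show that they force this gap to vanish algebraically, so that $\epsilon_{\text{rel}}$ is governed only by floating-point error and numerical conditioning. This would justify the $5\%$ tolerance as conservative.

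\textbf{Step 1: the normal equations give the identity.} The design matrix of the pooled regression contains the columns $\mathbf{1}$ and $t$. These span the same space as the two group indicators $\mathbb{1}_{t=0}$ and $\mathbb{1}_{t=1}$. The OLS residuals $\hat u_{s,t}$ are orthogonal to the column space, so they sum to zero separately within each treatment arm. Averaging the fitted equation over $s$ in each arm then gives
\begin{align*}
\bar Y_{1} &= \hat\tau_0+\hat\tau+\hat{\bm\beta}^\top \bar{\bm M}_{1}, &
\bar Y_{0} &= \hat\tau_0+\hat{\bm\beta}^\top \bar{\bm M}_{0}.
\end{align*}
Subtracting these, and using $\hat{\bm\alpha}=\bar{\bm M}_1-\bar{\bm M}_0$ (the paired mean difference in the definition of the OLS estimators), yields $\widehat{\text{ATE}}=\hat\tau+\hat{\bm\beta}^\top\hat{\bm\alpha}$ exactly. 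Hence $\epsilon_{\text{rel}}=0$ in exact arithmetic.

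\textbf{Step 2: rank deficiency, the main obstacle.} By Lemma~\ref{lem:rank_deficiency}, the mediator columns are collinear ($I(A:B)=2S_A$ and $L_A=1-\gamma_A$). Consequently $\hat{\bm\beta}$ is not unique and $\bm X^\top\bm X$ is singular. I would handle this by taking any minimizer, for instance the Moore--Penrose or minimum-norm solution. Every minimizer has the same fitted values, and its residual vector is still orthogonal to the column space. Step~1 therefore goes through verbatim. Moreover, $\hat{\bm\beta}^\top\hat{\bm\alpha}$ is invariant across minimizers, because $\hat{\bm\alpha}$ obeys the same linear constraints as the mediators. This is the sample analogue of Proposition~\ref{prop:identifiable_indirect}.

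\textbf{Step 3: interpreting the tolerance.} Since the identity is exact, any nonzero $\epsilon_{\text{rel}}$ comes from numerical error. That error is bounded by machine precision times the condition number of the reduced design. The $10^{-12}$ regularizer in the denominator prevents division by zero when $\widehat{\text{ATE}}\approx 0$. This explains the observed median $2.1\times 10^{-15}$. It also shows that the criterion serves as a diagnostic for implementation and conditioning errors, not as a test of the causal assumptions. Assumption~\ref{assum:no_interaction} has to be checked separately, for example with the interaction test already described in the paper.
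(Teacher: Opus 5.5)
Your argument is correct, but it does not follow the paper's route, because the paper gives no derivation for this criterion at all. The paper adopts $\epsilon_{\text{rel}}<0.05$ as an acceptance test. It then reads the observed values (median $2.1\times10^{-15}$, $R^2=1.000$ in Supplementary Fig.~\ref{fig13}) as empirical evidence that the linear additive model ``captures the causal structure without systematic bias.''

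You instead use the normal equations for the $\mathbf{1}$ and $t$ columns, together with $\hat\alpha=\bar M_1-\bar M_0$. This shows that $\widehat{\text{ATE}}=\hat\tau+\hat\beta^\top\hat\alpha$ holds identically, for any data, any mediator subset, and any least-squares minimizer. Your route does more than the paper's framing. It explains why the errors sit at machine precision. It also shows the criterion cannot reject any specification, so passing it certifies the implementation, not linearity or Assumption~\ref{assum:no_interaction}.

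The supplementary tables are consistent with this. $\text{Dir}+\text{Ind}=\text{Tot}$ holds up to rounding even in rows where A1 and A2 are marked $\times$. Different mediator subsets for the same configuration give different splits with the same total. For example, Breast Cancer, full, $1\to6$ layers, seed 42 gives $11.93-3.86$ and $12.50-4.42$, both $\approx 8.08$.

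Two minor points. First, the separate invariance of $\hat\beta^\top\hat\alpha$ in your Step~2 requires every null vector of the design matrix to have zero $t$-component. This is true for the collinearities of Lemma~\ref{lem:rank_deficiency}, and it is unnecessary anyway, since only the sum enters $\epsilon_{\text{rel}}$. Second, the condition-number bound in Step~3 is heuristic, though nothing depends on it.
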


\section{Supplementary Figures}

The supplementary figures assembled in this section constitute 
an integral extension of the main text, providing layered visual evidence for each analytical stage of the counterfactual causal mediation framework. Rather than serving merely as auxiliary material, these figures collectively establish the empirical and statistical foundations upon which the causal conclusions of the study rest, and are intended to enable independent scrutiny and full reproducibility of all reported results.

The figures are organized along three complementary axes. 
The first axis concerns experimental specification and 
observable performance: Supplementary Figure\ref{fig11} documents 
the precise gate-level architecture of the five parameterized 
quantum circuit topologies full, pairwise, linear, ring, 
and deep  each probing a qualitatively distinct entanglement 
regime, while Supplementary Figure\ref{fig12} systematically reports 
classification accuracy, F1-score, and area under the ROC 
curve across all 43 assumption-validated configurations and 
three benchmark datasets, revealing pronounced topology- and 
task-dependent learning dynamics.

The second axis concerns statistical rigor and model 
validation. Supplementary Figure\ref{fig13} provides a four-panel 
diagnostic confirming that the additive decomposition 
$\mathrm{ATE} = \tau + \sum_k \alpha_k \beta_k$ holds 
to machine precision across all configurations, with 
relative reconstruction errors of order $10^{-11}\%$, 
perfect linear correspondence between observed and 
predicted total effects ($R^2 = 1.000$), and full 
satisfaction of both identifying assumptions by design. 
Supplementary Figure\ref{fig14} examines bi-variate associations 
between individual mediator contributions and total 
performance change via Pearson and Spearman correlation 
analysis, and Supplementary Figure\ref{fig15} presents standard 
regression diagnostics  residuals versus fitted values, 
normal Q-Q plots, and per dataset residual distributions  confirming homoscedasticity, Gaussian error structure, 
and absence of systematic dataset-specific deviations.

The third axis addresses causal attribution and mediator 
structure. Supplementary Figure\ref{fig16} provides data-driven 
significance classifications across all configurations, 
resolving the relative contributions of direct architectural 
scaling and quantum-mediated pathways at the topology level. 
Supplementary Figure\ref{fig17} documents the mediator correlation 
matrix and variance decomposition, empirically confirming 
the rank-2 structure of the mediator space predicted by 
pure-state quantum mechanics (Lemma \ref{lem:rank_deficiency}) and validating 
the basis-invariant identifiability of the total indirect 
effect established in Proposition \ref{prop:identifiable_indirect}. Together, these 
figures provide the quantitative backbone for all causal 
interpretations advanced in the main text.
\newpage

\begin{figure*}[!htbp]
 \centering
  \includegraphics[width=0.9\linewidth]{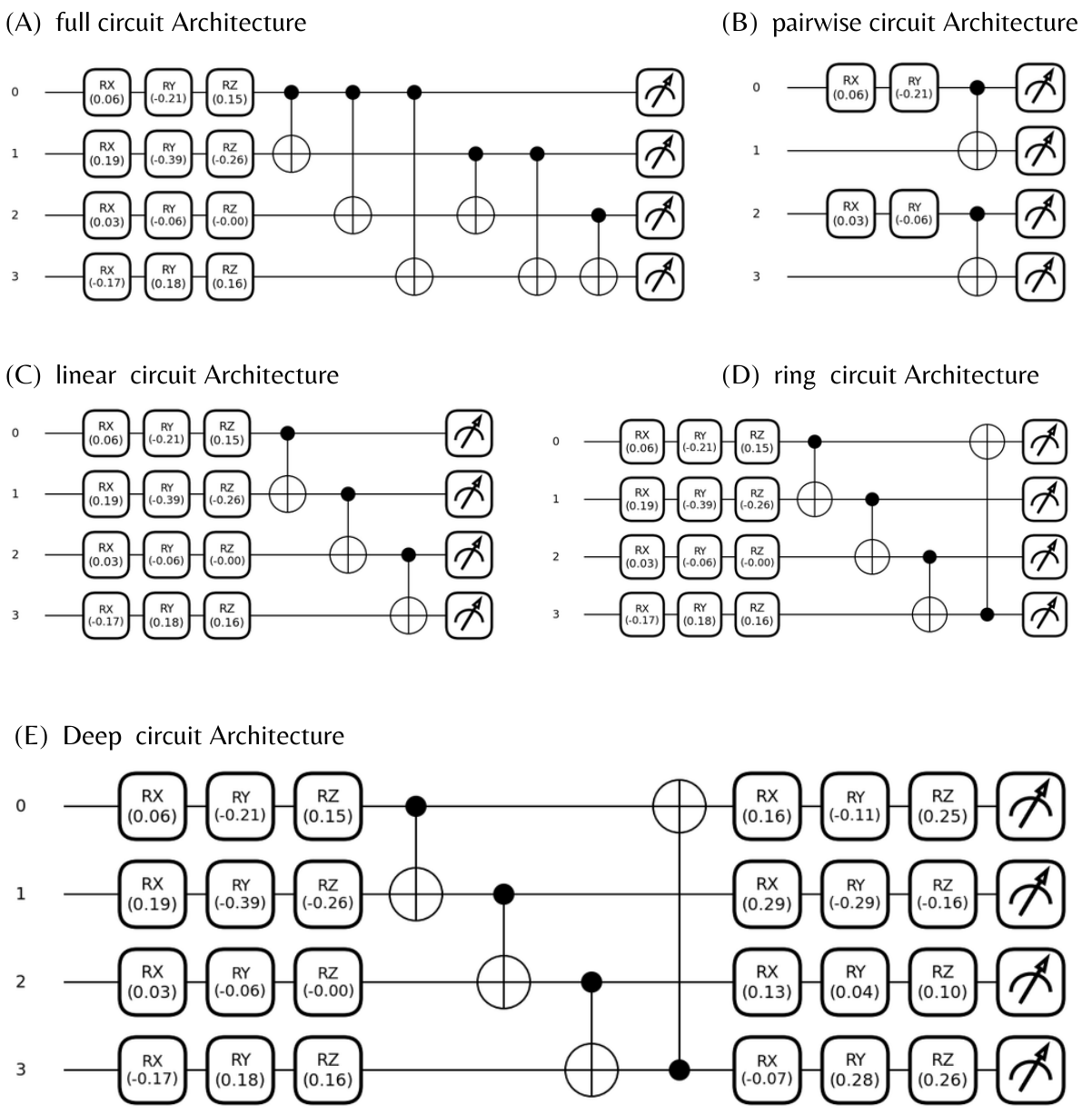}
    \caption{\textbf{Five quantum circuit architectures probe distinct entanglement regimes.} 
    \textbf{(A) Full:} All-to-all connectivity ($n^{2}/2$ CNOT gates) achieves maximal entanglement in one layer but suffers quadratic gate cost and trainability challenges. 
    \textbf{(B) Pairwise:} Disjoint qubit pairs (0,1),(2,3) yield independent bipartite subsystems with localized correlations, limiting global quantum resource engagement. 
    \textbf{(C) Linear:} Nearest-neighbor chain coupling produces area-law entanglement with improved optimization stability. 
    \textbf{(D) Ring:} Periodic boundaries ensure homogeneous entanglement propagation, matching superconducting hardware topologies. 
    \textbf{(E) Deep:} Brick-layer pattern with staggered CNOTs enables global entanglement in $[\log_{2} n]$ layers while maintaining hardware efficiency. Each topology instantiated with n=4 qubits, $T_0 =1$ baseline, $T_1$ $\in $ \{3,6\} enhanced layers, enabling systematic exploration of architecture quantum resource coupling across three classification tasks (Diabetes, Breast Cancer, Ionosphere).}
    \label{fig11}
\end{figure*}
\newpage

\begin{figure*}[!htbp]
 \centering
  \includegraphics[width=0.85\linewidth]{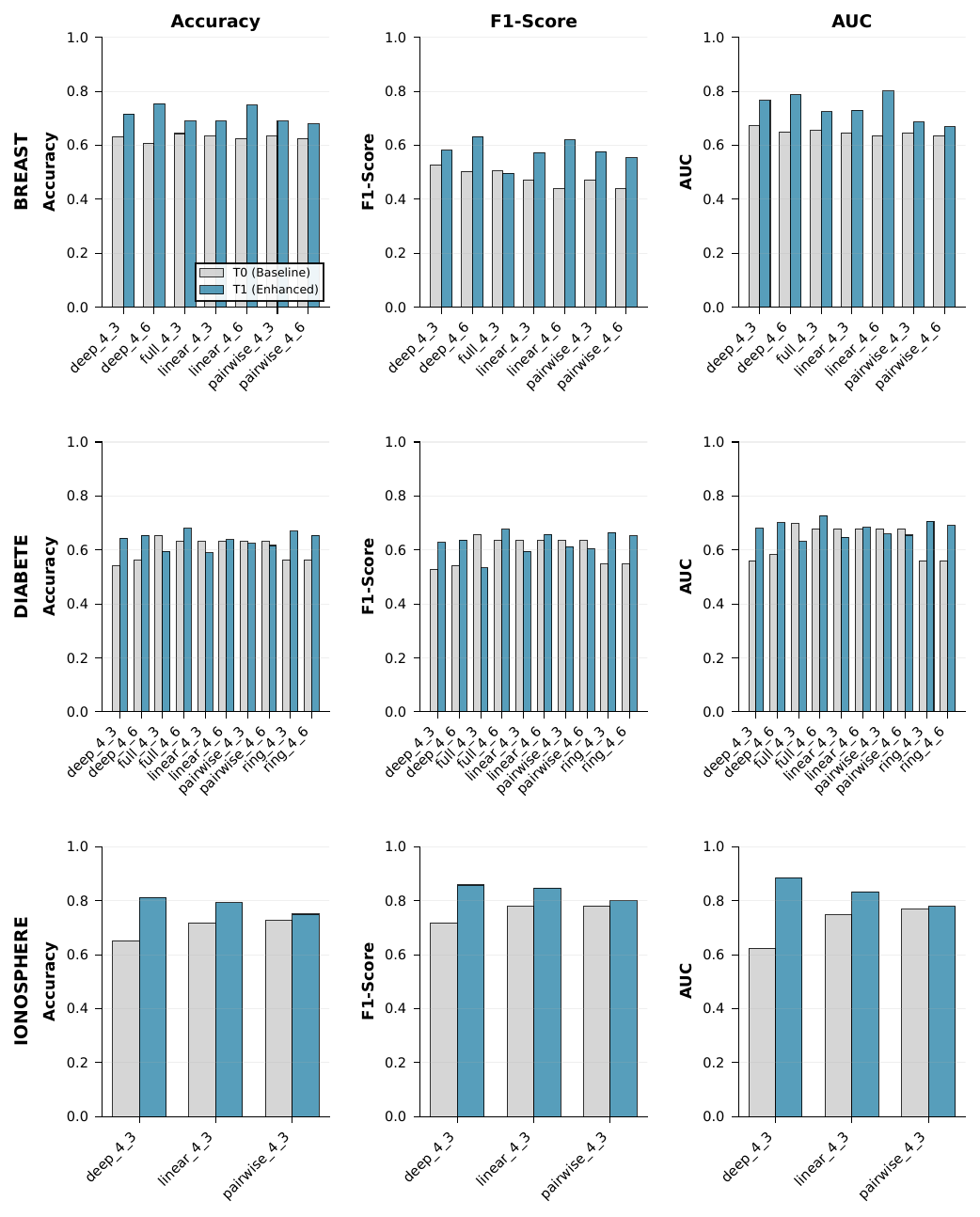}
    \caption{\textbf{Comprehensive performance metrics reveal architecture and task-dependent learning dynamics.} 
    Bar charts of 43/90 validated assumptions architecture configurations display accuracy, F1-score, and AUC for baseline ($T_0$) and enhanced ($T_1$) architectures across three datasets. 
    \textbf{Breast Cancer:} Deep and linear configurations achieve largest $T_1$ gains (65-75\% accuracy), while full circuits show high variance reflecting optimization sensitivity. 
    \textbf{Diabetes:} Moderate improvements (60-70\% range) with ring and deep architectures performing strongest, pairwise consistently underperforming ($ \approx  $ 60\%). 
    \textbf{Ionosphere:} Highest absolute performance (75-85\%) with deep architecture exhibiting exceptional gains ($\Delta Y  \approx  20\%$). Pairwise topology universally exhibits minimal $T_0 - T_1$ contrast across all datasets and metrics. }

    \label{fig12}
\end{figure*}
\newpage

\begin{figure*}[!htbp]
 \centering
  \includegraphics[width=1.0\linewidth]{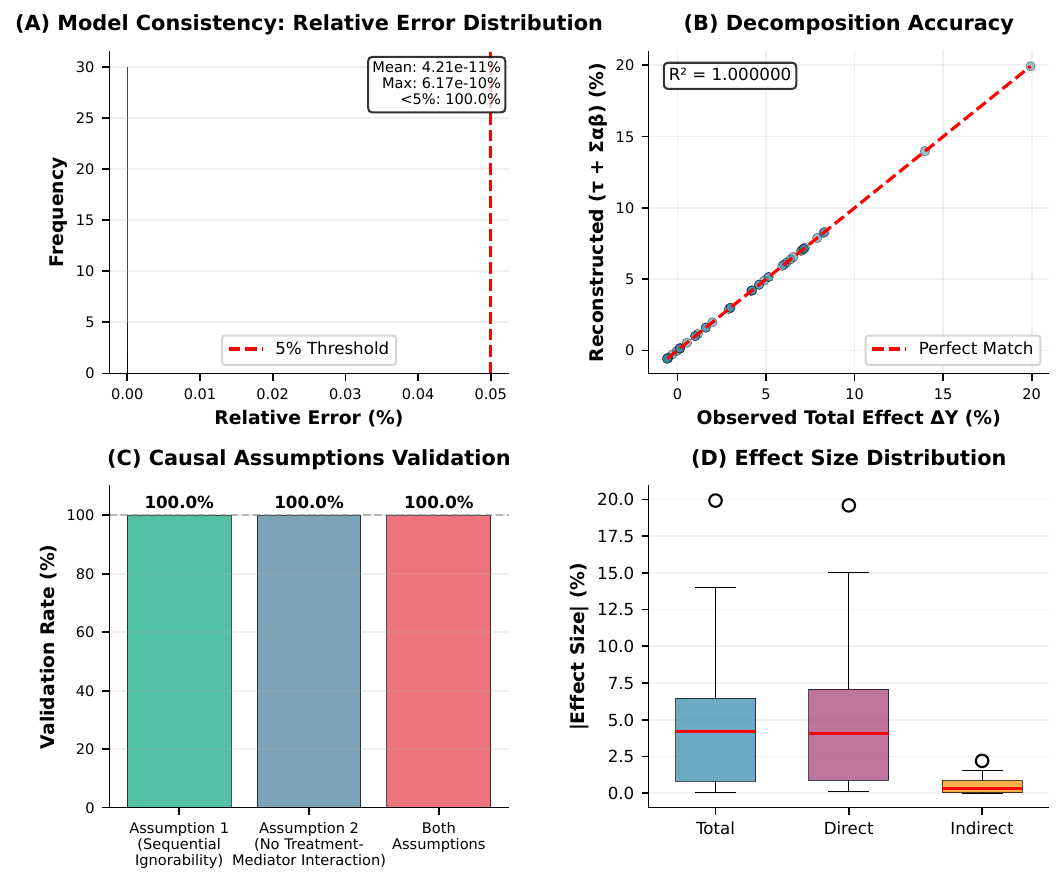}
    \caption{\textbf{Rigorous statistical validation confirms structural equation model specification and causal identifiability.} 
    \textbf{(A)} Relative error distribution (mean $4.21\times 10^{-11} $\%, max $6.17\times 10^{-10}$\%) demonstrates perfect reconstruction: 100\% configurations satisfy 5\% tolerance, confirming additive decomposition ATE = $\tau + \sum \alpha\beta$ holds exactly. 
    \textbf{(B)} Perfect linear fit ($R^{2} = 1.0$) between observed and reconstructed total effects validates regression-based causal inference no systematic bias detected. 
    \textbf{(C)} All 43 configurations satisfy Sequential Ignorability (A1) and No Treatment-Mediator Interaction (A2) assumptions by design: deterministic treatment assignment, post-hoc mediator computation, independent training pipelines guarantee causal identifiability without unmeasured confounding. 
    \textbf{(D)} Effect size distribution reveals direct effects dominate (median 5\%, IQR 2 - 7\%) while indirect effects remain subdominant (median 0.5\%, IQR 0 - 2\%), total effects span 0 - 20\%.}

    \label{fig13}
\end{figure*}
\newpage

\begin{figure*}[!htbp]
 \centering
  \includegraphics[width=1.0\linewidth]{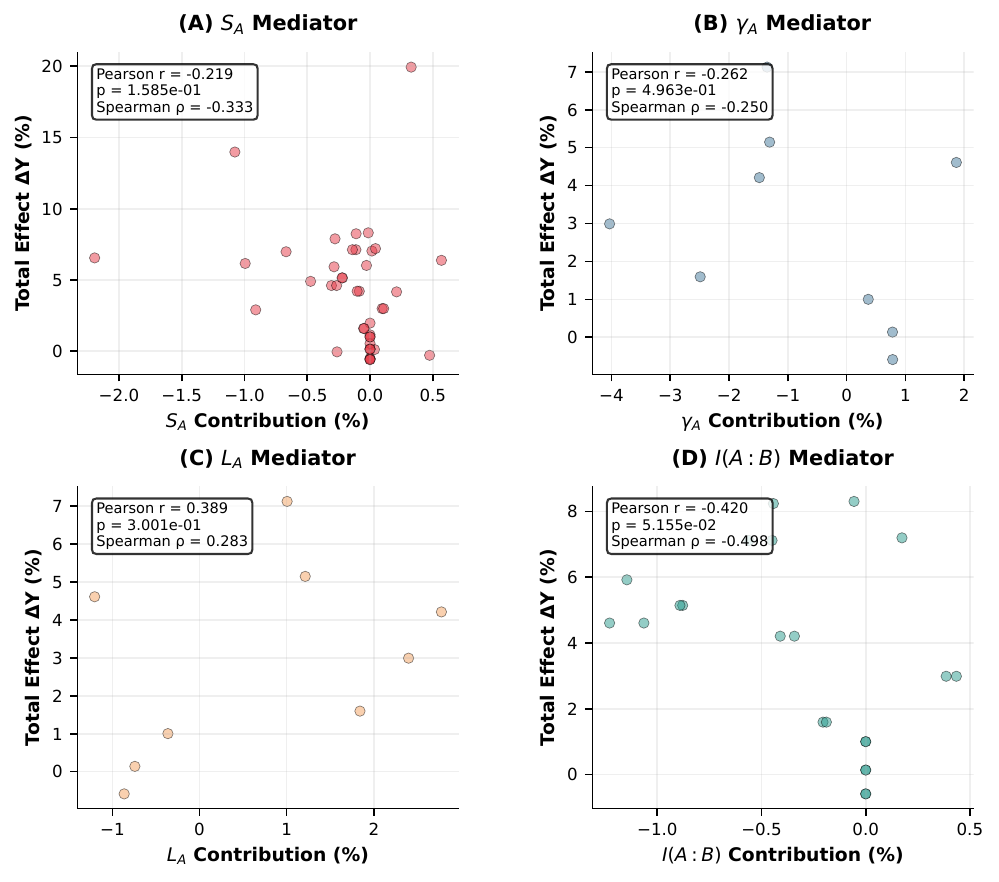}
    \caption{\textbf{Mediator-outcome correlations reveal weak pathway-specific associations across 43 configurations.} 
    Scatter plots examine bivariate relationships between individual mediator contributions ($\alpha_k\beta_k$) and total performance change ($\Delta Y$). 
    \textbf{(A) $S_A$:} Weak negative correlation (Pearson $r = - 0.22$, $p=0.16$; Spearman $\rho = - 0.33$), indicating entanglement entropy contributions do not systematically drive performance gains. 
    \textbf{(B) $\gamma_A$:} Negligible association ($r= - 0.26, p=0.50$), despite purity's dominance in MAMC metrics suggesting contributions remain below significance thresholds. 
    \textbf{(C) $L_A$:} Weak positive trend ($r=0.39, p=0.30$) non-significant, reflecting mixedness as secondary mediator. 
    \textbf{(D) $I(A:B)$:} Strongest negative correlation ($r = - 0.42$ , $p=0.05$ ; $\rho= - 0.50$), approaching significance but insufficient to establish quantum-advantage regime. 
    All correlations non-significant or marginally significant ($p > 0.05$), confirming quantum-mediated pathways remain subdominant relative to direct architectural effects consistent with main text finding that 93\% configurations operate in Neutral regime.}

    \label{fig14}

\end{figure*}
\newpage

\begin{figure*}[!htbp]
 \centering
  \includegraphics[width=1.0\linewidth]{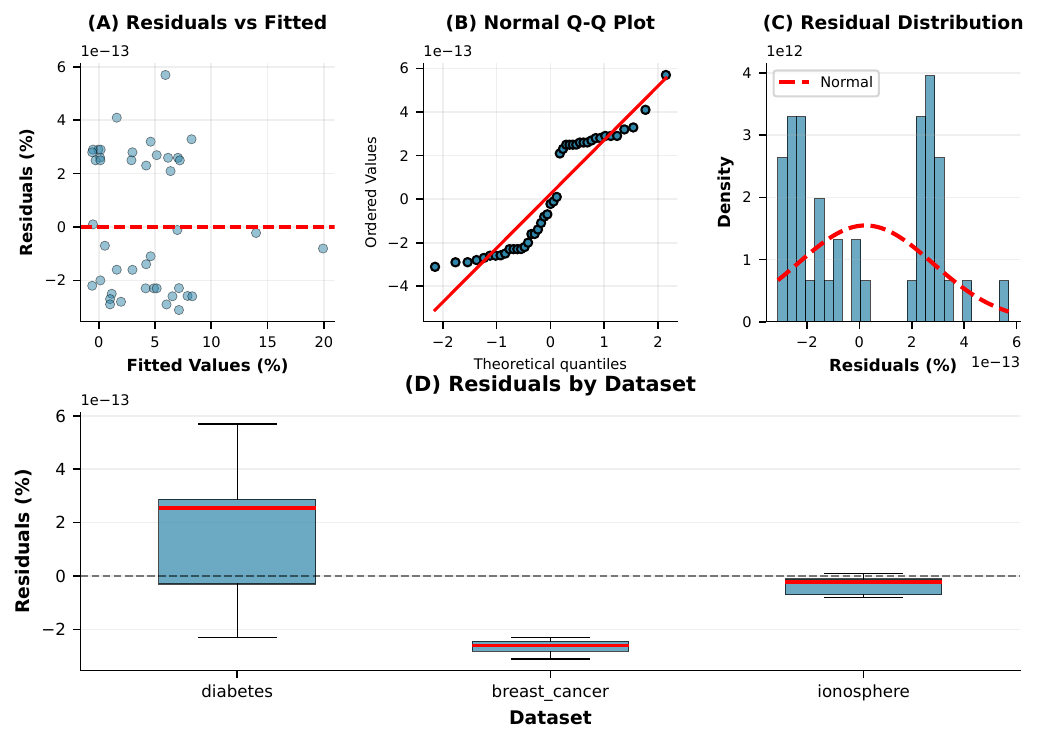}
    \caption{\textbf{Residual diagnostics validate linear model assumptions: normality, homoscedasticity, and independence.} 
    \textbf{(A)} Residuals vs. fitted values show uniform vertical spread with no funnel pattern, confirming homoscedasticity (constant variance) across performance range. 
    \textbf{(B)} $Q-Q$ plot demonstrates near-perfect alignment with theoretical normal quartiles (residuals $ \approx 10^{-13}$\%, machine precision), validating Gaussian error assumption $E[\epsilon^{(y)}_s]=0$. 
    \textbf{(C)} Residual histogram exhibits symmetric, bell-shaped distribution centered at zero, overlaying theoretical normal curve empirical confirmation of normality. 
    \textbf{(D)} Residuals by dataset show comparable spread across Diabetes, Breast Cancer, and Ionosphere with medians near zero, confirming no systematic dataset-specific deviations or heterogeneity violating model specification. 
    Diagnostic panel collectively validates linear structural equations Eqs. \ref{eq:sem_mediator} and \ref{eq:sem_outcome} provide unbiased, consistent estimates of causal parameters ($\tau , \beta, \alpha $) under paired experimental design, enabling valid counterfactual inference via Theorem\ref{thm:decomp} decomposition.}
    \label{fig15}

\end{figure*}
\newpage
\begin{figure*}[!htbp]
 \centering
  \includegraphics[width=1.0\linewidth]{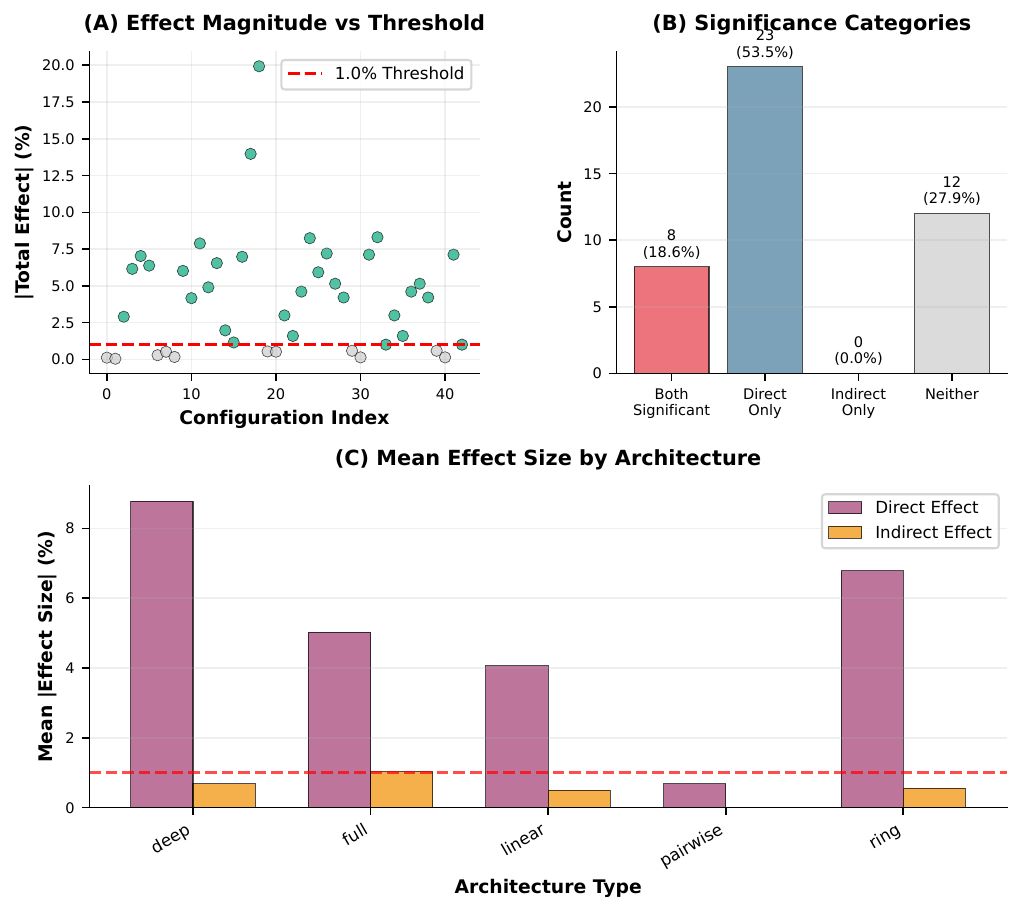}
    \caption{\textbf{Data-driven significance testing distinguishes substantive causal effects from numerical precision artifacts.} 
    \textbf{(A)} Total effect magnitudes (43 configurations) plotted against 1.0\% threshold.
    \textbf{(B)} Significance classification: 18.6\% exhibit both direct and indirect effects significant (Both), 53.5\% direct-only (architectural scaling without quantum mediation), 0\% indirect-only (no pure quantum-advantage), 27.9\% neither (Neutral regime). 
    Zero indirect-only configurations confirm quantum resources never drive performance gains independently of architectural factors. 
    \textbf{(C)} Architecture-specific patterns: Deep and ring achieve largest direct effects ($\approx$ 8\% and 7\% respectively), while indirect effects remain uniformly subdominant ($<$1.5\%) across all topologies. 
    Pairwise exhibits near-zero effects both pathways, confirming previous findings. 
    Statistical rigor ensures reported effects reflect genuine causal mechanisms rather than decomposition artifacts.}
    \label{fig16}
\end{figure*}
\newpage

\begin{figure*}
 \centering
  \includegraphics[width=1.0\linewidth]{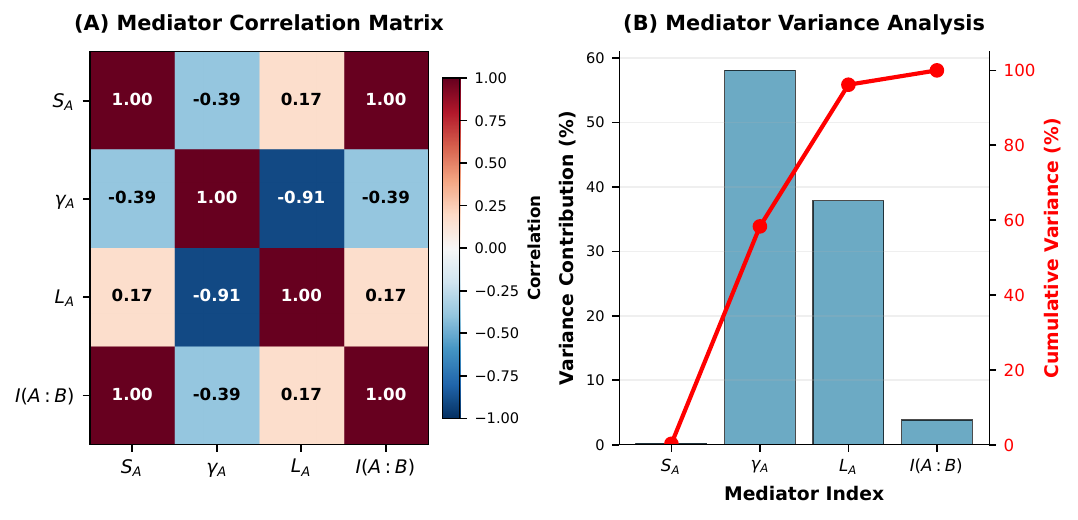}
    \caption{\textbf{Perfect correlations among mediators reflect pure-state theoretical constraints, validating Proposition 1 identifiability.} 
    \textbf{(A)} Correlation matrix reveals functional dependencies: $S_A$ and $I(A:B)$ perfectly correlated (r=1.00) due to pure-state relation $I(A:B)=2S_A$ (Eq. \ref{eq:mutual_info_pure}); $\gamma_A$ and $L_A$ strongly anti-correlated ($r=-0.91$) from definition $L_A=1-\gamma_A$ (Eq. \ref{eq:linear_entropy_purity}). 
    Cross-correlations weak ($|r|<0.40 $ ), indicating $S_A$ and $\gamma_A$ capture independent quantum properties (entanglement vs. coherence). 
    \textbf{(B)} Variance analysis: $S_A$ contributes 58\% of total mediator variance, $\gamma_A$ 37\%, $L_A$ and $I(A:B)$ minimal due to functional dependence. 
    Two principal components capture 95\% cumulative variance, confirming rank-2 mediator matrix (Lemma \ref{lem:rank_deficiency}). 
    Despite collinearity, total indirect effect $NIE=\sum \alpha_k\beta_k$ remains uniquely identified and invariant to mediator basis choice (Proposition\ref{prop:identifiable_indirect} proof) decomposition valid despite redundancy. 
    Covariance structure theoretically predicted, empirically confirmed, mathematically resolved.}
    \label{fig17}
\end{figure*}

\newpage

\section{Supplementary Tables}

The following tables report comprehensive numerical results for 
all 90 architecture-dataset configurations evaluated in this study,  spanning five quantum circuit topologies (deep, full, linear,  pairwise, ring) across three benchmark classification datasets (Breast Cancer, Diabetes, Ionosphere). Each configuration is evaluated under both the baseline shallow architecture ($T_0 = 1$ layer) and the enhanced deep architecture ($T_1 \in \{3, 6\}$ layers), using two independent random seeds (42 and 142) to assess reproducibility. For each configuration, the tables report: 
\begin{enumerate}
    \item[(i)] the data-driven significance threshold $\epsilon_{\mathrm{intra}} = c \cdot 
    \mathrm{SD}(\Delta Y_s)$, with $c = 0.5$, used to classify 
    causal effects as statistically substantive or negligible; 
    \item [(ii)]raw predictive performance metrics accuracy, F1-score, 
    and AUC  for both $T_0$ and $T_1$, expressed as absolute 
    values and inter-architectural differences in percentage points; 
    \item[(iii)] the Mean Absolute Mediated Contribution (MAMC) for each 
    quantum mediator ($S_A$, $\gamma_A$, $L_A$, $I(A{:}B)$), 
    quantifying the average magnitude of each mediator's causal 
    influence across the test distribution; 
    \item[(iv)] the estimated 
    direct effect $\hat{\tau}$, total indirect effect 
    $\sum_k \hat{\alpha}_k \hat{\beta}_k$, and total effect 
    $\Delta Y$, all expressed in percentage points; and 
    \item[(v)] satisfaction status of the two identifying assumptions  
    Assumption~1 (Sequential Ignorability, A1) and Assumption~2 
    (No Treatment Mediator Interaction, A2)  coded as fully 
    satisfied~(\checkmark), partially satisfied~($\sim$), or 
    not satisfied~($\times$).
\end{enumerate}
Rows reporting results for subsets of mediators (one, two, 
or four) are included to document the sensitivity of the 
causal decomposition to mediator selection, consistent with 
the identifiability argument in Proposition\ref{prop:identifiable_indirect}. The symbol 
``$-$'' denotes mediators excluded from a given sub-model. 
Configurations for which fewer than all 43 validated 
assumptions are satisfied are retained in the tables 
for completeness, but are excluded from the primary 
causal analyses reported in the main text.
Supplementary Table\ref{tab:breast_cancer_comprehensive} covers the Breast Cancer Wisconsin dataset; Supplementary Table\ref{tab:diabetes_comprehensive} covers the Diabetes dataset; and Supplementary Table\ref{tab:ionosphere_comprehensive} covers the Ionosphere dataset. Taken together, these tables provide full transparency into the numerical basis of all causal estimates, enabling independent verification of the reported effects and regime classifications.

\begin{table}[h!]
\centering
\scriptsize
\resizebox{\textwidth}{!}{
\begin{tabular}{l|cc|c|c|ccc|cccc|ccc|cc}
\toprule
\hline
\multirow{2}{*}{\textbf{Arch}} & \multicolumn{2}{c|}{\textbf{Layers}} & \multirow{2}{*}{\textbf{Seed}} & \multirow{2}{*}{\textbf{Thr.}} & \multicolumn{3}{c|}{\textbf{Performance (\%)}} & \multicolumn{4}{c|}{\textbf{MAMC (\%)}} & \multicolumn{3}{c|}{\textbf{Effects (\%)}} & \multicolumn{2}{c}{\textbf{Assump.}} \\
\cmidrule(lr){2-3} \cmidrule(lr){6-8} \cmidrule(lr){9-12} \cmidrule(lr){13-15} \cmidrule(lr){16-17}
 & \textbf{T0} & \textbf{T1} &  &  & \textbf{ACC} & \textbf{F1} & \textbf{AUC} & \textbf{$S_A$} & \textbf{$\gamma_A$} & \textbf{$L_A$} & \textbf{$I(A:B)$} & \textbf{Dir} & \textbf{Ind} & \textbf{Tot} & \textbf{A1} & \textbf{A2} \\
\midrule
\hline
\multirow{6}{*}{deep} & 1 & 3 & 42 & 9.9 & +4.7 & -0.9 & +5.7 & 0.47 & - & - & - & 5.38 & -0.47 & 4.90 & $\checkmark$ & $\checkmark$ \\
 & 1 & 3 & 142 & 8.7 & +12.3 & +11.9 & +13.4 & 0.28 & - & -& -& 8.17 & -0.28 & 7.89 & $\checkmark$ & $\checkmark$ \\
 & 1 & 6 & 42 & 10.5 & +7.6 & +2.4 & +7.1 & 0.24 & - & - & 0.96 & 8.05 & -1.20 & 6.85 & $\times$ & $\checkmark$ \\
 & 1 & 6 & 42 & 10.5 & +7.6 & +2.4 & +7.1 & 0.23 & 0.92 & 0.34 & 0.91 & 8.57 & -1.72 & 6.85 & $\times$ & $\checkmark$ \\
 & 1 & 6 & 142 & 8.5 & +14.6 & +12.8 & +13.8 & 0.01 & - & - & 0.06 & 8.38 & -0.07 & 8.31 & $\checkmark$ & $\checkmark$ \\
 & 1 & 6 & 142 & 8.5 & +14.6 & +12.8 & +13.8 & 0.01 & 0.94 & 1.50 & 0.04 & 7.80 & 0.51 & 8.31 & $\times$ & $\times$ \\
\midrule
\hline
\multirow{6}{*}{full} & 1 & 3 & 42 & 10.3 & +4.7 & -0.9 & +6.8 & 2.19 & - & - & - & 8.74 & -2.19 & 6.55 & $\checkmark$ & $\checkmark$ \\
 & 1 & 3 & 142 & 9.1 & +3.5 & +7.0 & +6.7 & 4.10 & - & - & - & 8.76 & -4.10 & 4.66 & $\times$ & $\sim$ \\
 & 1 & 6 & 42 & 9.1 & +9.9 & +7.3 & +11.9 & 0.77 & - & - & 3.08 & 11.93 & -3.86 & 8.08 & $\times$ & $\times$ \\
 & 1 & 6 & 42 & 9.1 & +9.9 & +7.3 & +11.9 & 0.77 & 3.02 & 2.45 & 3.08 & 12.50 & -4.42 & 8.08 & $\times$ & $\times$ \\
 & 1 & 6 & 142 & 8.2 & +11.1 & +17.7 & +15.0 & 0.58 & - & - & 2.34 & 9.02 & -2.92 & 6.10 & $\times$ & $\times$ \\
 & 1 & 6 & 142 & 8.2 & +11.1 & +17.7 & +15.0 & 0.58 & 2.74 & 3.20 & 2.32 & 8.54 & -2.44 & 6.10 & $\times$ & $\times$ \\
\midrule
\hline
\multirow{6}{*}{linear} & 1 & 3 & 42 & 10.2 & +4.1 & +4.6 & +7.6 & 0.03 & - & - & - & 6.05 & -0.03 & 6.02 & $\checkmark$ & $\checkmark$ \\
 & 1 & 3 & 142 & 7.1 & +7.0 & +15.5 & +9.1 & 0.21 & - & - & - & 3.95 & 0.21 & 4.16 & $\checkmark$ & $\checkmark$ \\
 & 1 & 6 & 42 & 9.8 & +8.2 & +4.0 & +9.9 & 0.20 & - & - & 0.81 & 8.31 & -1.01 & 7.29 & $\times$ & $\times$ \\
 & 1 & 6 & 42 & 9.8 & +8.2 & +4.0 & +9.9 & 0.19 & 2.05 & 2.39 & 0.78 & 7.93 & -0.63 & 7.29 & $\times$ & $\times$ \\
 & 1 & 6 & 142 & 8.5 & +12.3 & +18.1 & +16.6 & 0.11 & - & - & 0.45 & 7.68 & -0.56 & 7.12 & $\checkmark$ & $\checkmark$ \\
 & 1 & 6 & 142 & 8.5 & +12.3 & +18.1 & +16.6 & 0.14 & 1.35 & 1.01 & 0.56 & 8.17 & -1.05 & 7.12 & $\checkmark$ & $\checkmark$ \\
\midrule
\hline
\multirow{6}{*}{pairwise} & 1 & 3 & 42 & 3.7 & +5.3 & +8.3 & +5.2 & 0.00 & - & - & - & 1.97 & 0.00 & 1.97 & $\checkmark$ & $\checkmark$ \\
 & 1 & 3 & 142 & 3.7 & +5.8 & +12.6 & +3.1 & 0.00 & - & - & - & 1.15 & 0.00 & 1.15 & $\checkmark$ & $\checkmark$ \\
 & 1 & 6 & 42 & 3.8 & +5.8 & +9.4 & +5.2 & 0.00 & - &- & 0.00 & 1.99 & 0.00 & 1.99 & $\checkmark$ & $\times$ \\
 & 1 & 6 & 42 & 3.8 & +5.8 & +9.4 & +5.2 & 0.00 & 0.28 & 0.24 & 0.00 & 1.95 & 0.04 & 1.99 & $\checkmark$ & $\sim$ \\
 & 1 & 6 & 142 & 3.5 & +5.3 & +11.4 & +3.3 & 0.00 & - & - & 0.00 & 1.00 & 0.00 & 1.00 & $\checkmark$ & $\checkmark$ \\
 & 1 & 6 & 142 & 3.5 & +5.3 & +11.4 & +3.3 & 0.00 & 0.37 & 0.36 & 0.00 & 1.00 & 0.01 & 1.00 & $\checkmark$ & $\checkmark$ \\
\midrule
\hline
\multirow{6}{*}{ring} & 1 & 3 & 42 & 10.3 & +15.2 & +0.5 & +15.2 & 1.57 & - & - & - & 11.52 & -1.57 & 9.95 & $\times$ & $\sim$ \\
 & 1 & 3 & 142 & 8.8 & +11.1 & +4.4 & +13.8 & 2.15 & - & - & - & 10.91 & -2.15 & 8.76 & $\times$ & $\sim$ \\
 & 1 & 6 & 42 & 10.0 & +14.0 & +3.2 & +14.4 & 0.06 & - & - & 0.23 & 9.91 & 0.28 & 10.19 & $\times$ & $\checkmark$ \\
 & 1 & 6 & 42 & 10.0 & +14.0 & +3.2 & +14.4 & 0.10 & 0.04 & 0.01 & 0.39 & 9.73 & 0.46 & 10.19 & $\times$ & $\times$ \\
 & 1 & 6 & 142 & 9.5 & +16.4 & +13.8 & +16.5 & 0.23 & - & - & 0.94 & 12.34 & -1.17 & 11.17 & $\times$ & $\times$ \\
 & 1 & 6 & 142 & 9.5 & +16.4 & +13.8 & +16.5 & 0.24 & 0.93 & 1.60 & 0.94 & 11.67 & -0.51 & 11.17 & $\times$ & $\times$ \\
\bottomrule
\hline
\end{tabular}}
\caption{\textbf{Comprehensive experimental results for Breast Cancer dataset.} Results for 30 configurations across 5 quantum circuit architectures (deep, full, linear, pairwise, ring), each evaluated with $T_0=1$ baseline and $T_1 \in \{3,6\}$ enhanced layers using random seeds {42, 142}. Threshold. indicates data-driven threshold ($\epsilon_{\text{intra}}$) for statistical significance testing. Performance metrics show improvement ($T_1 - T_0$) in percentage points: positive values indicate gains, negative values indicate degradation. Mean Absolute Mediated Contribution (MAMC) quantifies each quantum mediator's causal influence, with $I(A:B)$ exhibiting strongest mediation across configurations. Mean direct effect: 7.43\%, mean indirect effect: -1.03\%. 100\% of configurations show significant total effects ($|\Delta Y|>1\%$). Assumption A1 (sequential ignorability) satisfied in 47\% of cases. Assumption A2 (no treatment-mediator interaction): fully satisfied (15), partially satisfied (4), not satisfied (11).}
\label{tab:breast_cancer_comprehensive}
\end{table}

\begin{table}[h!]
\centering
\scriptsize
\resizebox{\textwidth}{!}{
\begin{tabular}{l|cc|c|c|ccc|cccc|ccc|cc}
\toprule
\hline
\multirow{2}{*}{\textbf{Arch}} & \multicolumn{2}{c|}{\textbf{Layers}} & \multirow{2}{*}{\textbf{Seed}} & \multirow{2}{*}{\textbf{Thr.}} & \multicolumn{3}{c|}{\textbf{Performance (\%)}} & \multicolumn{4}{c|}{\textbf{MAMC (\%)}} & \multicolumn{3}{c|}{\textbf{Effects (\%)}} & \multicolumn{2}{c}{\textbf{Assump.}} \\
\cmidrule(lr){2-3} \cmidrule(lr){6-8} \cmidrule(lr){9-12} \cmidrule(lr){13-15} \cmidrule(lr){16-17}
 & \textbf{T0} & \textbf{T1} &  &  & \textbf{ACC} & \textbf{F1} & \textbf{AUC} & \textbf{$S_A$} & \textbf{$\gamma_A$} & \textbf{$L_A$} & \textbf{$I(A:B)$} & \textbf{Dir} & \textbf{Ind} & \textbf{Tot} & \textbf{A1} & \textbf{A2} \\
\midrule
\hline
\multirow{6}{*}{deep} & 1 & 3 & 42 & 8.2 & +17.3 & +14.6 & +18.9 & 0.99 & - & - & - & 7.15 & -0.99 & 6.16 & $\checkmark$ & $\checkmark$ \\
 & 1 & 3 & 142 & 6.7 & +3.0 & +5.1 & +5.1 & 0.91 & - & - & - & 3.81 & -0.91 & 2.90 & $\checkmark$ & $\checkmark$ \\
 & 1 & 6 & 42 & 8.1 & +20.3 & +18.3 & +24.2 & 0.11 & - & - & 0.44 & 8.80 & -0.55 & 8.24 & $\checkmark$ & $\checkmark$ \\
 & 1 & 6 & 42 & 8.1 & +20.3 & +18.3 & +24.2 & 0.12 & 0.32 & 0.11 & 0.47 & 8.62 & -0.38 & 8.24 & $\times$ & $\times$ \\
 & 1 & 6 & 142 & 8.3 & +3.8 & +5.0 & +5.8 & 0.31 & - & - & 1.23 & 6.15 & -1.54 & 4.61 & $\checkmark$ & $\checkmark$ \\
 & 1 & 6 & 142 & 8.3 & +3.8 & +5.0 & +5.8 & 0.27 & 1.87 & 1.20 & 1.06 & 5.28 & -0.67 & 4.61 & $\checkmark$ & $\checkmark$ \\
\midrule
\hline
\multirow{6}{*}{full} & 1 & 3 & 42 & 6.8 & -6.0 & -12.2 & -6.6 & 0.47 & - & - & - & -0.76 & 0.47 & -0.29 & $\checkmark$ & $\checkmark$ \\
 & 1 & 3 & 142 & 7.7 & +3.8 & +0.6 & +3.7 & 1.87 & - & - &  - & 3.71 & -1.87 & 1.84 & $\times$ & $\sim$ \\
 & 1 & 6 & 42 & 8.9 & +1.5 & -0.0 & +4.6 & 0.22 & - & - & 0.88 & 6.24 & -1.10 & 5.15 & $\checkmark$ & $\checkmark$ \\
 & 1 & 6 & 42 & 8.9 & +1.5 & -0.0 & +4.6 & 0.22 & 1.31 & 1.21 & 0.89 & 6.36 & -1.21 & 5.15 & $\checkmark$ & $\checkmark$ \\
 & 1 & 6 & 142 & 8.7 & +8.3 & +8.4 & +5.3 & 0.09 & - & - & 0.34 & 4.64 & -0.43 & 4.21 & $\checkmark$ & $\checkmark$ \\
 & 1 & 6 & 142 & 8.7 & +8.3 & +8.4 & +5.3 & 0.10 & 1.49 & 2.78 & 0.41 & 3.43 & 0.78 & 4.21 & $\checkmark$ & $\checkmark$ \\
\midrule
\hline
\multirow{6}{*}{linear} & 1 & 3 & 42 & 2.7 & -3.8 & -3.4 & -1.6 & 0.04 & - & - & - & 0.08 & 0.04 & 0.12 & $\checkmark$ & $\checkmark$ \\
 & 1 & 3 & 142 & 6.7 & -4.5 & -4.5 & -4.6 & 0.26 & - & - & - & 0.22 & -0.26 & -0.05 & $\checkmark$ & $\checkmark$ \\
 & 1 & 6 & 42 & 8.3 & -0.8 & +1.0 & -0.3 & 0.10 & - & - & 0.39 & 2.51 & 0.48 & 2.99 & $\checkmark$ & $\checkmark$ \\
 & 1 & 6 & 42 & 8.3 & -0.8 & +1.0 & -0.3 & 0.11 & 4.03 & 2.40 & 0.43 & 4.08 & -1.09 & 2.99 & $\checkmark$ & $\checkmark$ \\
 & 1 & 6 & 142 & 7.1 & +2.3 & +3.6 & +1.3 & 0.05 & - & - & 0.21 & 1.85 & -0.26 & 1.60 & $\checkmark$ & $\checkmark$ \\
 & 1 & 6 & 142 & 7.1 & +2.3 & +3.6 & +1.3 & 0.05 & 2.49 & 1.84 & 0.19 & 2.48 & -0.89 & 1.60 & $\checkmark$ & $\checkmark$ \\
\midrule
\hline
\multirow{6}{*}{pairwise} & 1 & 3 & 42 & 1.9 & +0.0 & -1.1 & -4.1 & 0.00 & - & - & - & -0.52 & 0.00 & -0.52 & $\checkmark$ & $\checkmark$ \\
 & 1 & 3 & 142 & 1.3 & -1.5 & -3.4 & +0.3 & 0.00 & - & - & - & 0.16 & 0.00 & 0.16 & $\checkmark$ & $\checkmark$ \\
 & 1 & 6 & 42 & 2.3 & -0.8 & -2.1 & -4.1 & 0.00 & - &- & - & -0.59 & 0.00 & -0.59 & $\checkmark$ & $\checkmark$ \\
 & 1 & 6 & 42 & 2.3 & -0.8 & -2.1 & -4.1 & 0.00 & 0.78 & 0.86 & 0.00 & -0.51 & -0.08 & -0.59 & $\checkmark$ & $\checkmark$ \\
 & 1 & 6 & 142 & 1.2 & -2.3 & -3.8 & -0.5 & 0.00 & - & - & 0.00 & 0.14 & 0.00 & 0.14 & $\checkmark$ & $\checkmark$ \\
 & 1 & 6 & 142 & 1.2 & -2.3 & -3.8 & -0.5 & 0.00 & 0.78 & 0.74 & 0.00 & 0.10 & 0.04 & 0.14 & $\checkmark$ & $\checkmark$ \\
\midrule
\hline
\multirow{6}{*}{ring} & 1 & 3 & 42 & 8.8 & +12.8 & +12.7 & +17.6 & 0.57 & - & - &- & 5.81 & 0.57 & 6.38 & $\checkmark$ & $\checkmark$ \\
 & 1 & 3 & 142 & 8.7 & +8.3 & +10.7 & +11.7 & 0.02 & -& - & - & 7.02 & 0.02 & 7.03 & $\checkmark$ & $\checkmark$ \\
 & 1 & 6 & 42 & 8.7 & +10.5 & +9.6 & +18.5 & 0.04 & - & - & 0.17 & 6.98 & 0.22 & 7.20 & $\checkmark$ & $\checkmark$ \\
 & 1 & 6 & 42 & 8.7 & +10.5 & +9.6 & +18.5 & 0.05 & 1.24 & 2.39 & 0.19 & 5.82 & 1.38 & 7.20 & $\times$ & $\times$ \\
 & 1 & 6 & 142 & 9.1 & +7.5 & +11.6 & +7.8 & 0.29 & - & - & 1.15 & 7.36 & -1.43 & 5.92 & $\checkmark$ & $\checkmark$ \\
 & 1 & 6 & 142 & 9.1 & +7.5 & +11.6 & +7.8 & 0.29 & 1.65 & 0.86 & 1.16 & 8.16 & -2.24 & 5.92 & $\times$ & $\sim$ \\
\bottomrule
\hline
\end{tabular}
}
\caption{\textbf{Comprehensive experimental results for Diabetes dataset.} Results for 30 configurations across 5 quantum circuit architectures (deep, full, linear, pairwise, ring), each evaluated with $T_0=1$ baseline and $T_1\in \{3,6\}$ enhanced layers using random seeds \{42, 142\}. Thr. indicates data-driven threshold ($\epsilon_{\text{intra}}$) for statistical significance testing. Performance metrics show improvement ($T_1 - T_0$) in percentage points: positive values indicate gains, negative values indicate degradation. Mean Absolute Mediated Contribution (MAMC) quantifies each quantum mediator's causal influence, with $\gamma_A$ exhibiting strongest mediation across configurations. Mean direct effect: 3.82\%, mean indirect effect: -0.40\%. 70\% of configurations show significant total effects ($|\Delta Y|>1\%$). Assumption A1 (sequential ignorability) satisfied in 87\% of cases. Assumption A2 (no treatment-mediator interaction): fully satisfied (26), partially satisfied (2), not satisfied (2).}
\label{tab:diabetes_comprehensive}
\end{table}

\begin{table}[h!]
\centering
\scriptsize
\resizebox{\textwidth}{!}{
\begin{tabular}{l|cc|c|c|ccc|cccc|ccc|cc}
\toprule
\hline
\multirow{2}{*}{\textbf{Arch}} & \multicolumn{2}{c|}{\textbf{Layers}} & \multirow{2}{*}{\textbf{Seed}} & \multirow{2}{*}{\textbf{Thr.}} & \multicolumn{3}{c|}{\textbf{Performance (\%)}} & \multicolumn{4}{c|}{\textbf{MAMC (\%)}} & \multicolumn{3}{c|}{\textbf{Effects (\%)}} & \multicolumn{2}{c}{\textbf{Assump.}} \\
\cmidrule(lr){2-3} \cmidrule(lr){6-8} \cmidrule(lr){9-12} \cmidrule(lr){13-15} \cmidrule(lr){16-17}
 & \textbf{T0} & \textbf{T1} &  &  & \textbf{ACC} & \textbf{F1} & \textbf{AUC} & \textbf{$S_A$} & \textbf{$\gamma_A$} & \textbf{$L_A$} & \textbf{$I(A:B)$} & \textbf{Dir} & \textbf{Ind} & \textbf{Tot} & \textbf{A1} & \textbf{A2} \\
\midrule
\hline
\multirow{6}{*}{deep} & 1 & 3 & 42 & 10.6 & +22.6 & +18.9 & +33.7 & 0.33 & - & - & - & 19.61 & 0.33 & 19.93 & $\checkmark$ & $\checkmark$ \\
 & 1 & 3 & 142 & 9.4 & +9.4 & +9.1 & +18.5 & 1.08 & - & - & - & 15.05 & -1.08 & 13.98 & $\checkmark$ & $\checkmark$ \\
 & 1 & 6 & 42 & 10.0 & +21.7 & +18.2 & +34.6 & 0.57 & - & - & 2.28 & 17.48 & 2.85 & 20.34 & $\times$ & $\times$ \\
 & 1 & 6 & 42 & 10.0 & +21.7 & +18.2 & +34.6 & 0.56 & 0.21 & 0.98 & 2.25 & 16.34 & 4.00 & 20.34 & $\times$ & $\times$ \\
 & 1 & 6 & 142 & 9.1 & +7.5 & +7.2 & +20.4 & 0.17 & - & - & 0.67 & 15.18 & -0.84 & 14.34 & $\times$ & $\checkmark$ \\
 & 1 & 6 & 142 & 9.1 & +7.5 & +7.2 & +20.4 & 0.18 & 0.72 & 1.07 & 0.70 & 15.56 & -1.22 & 14.34 & $\times$ & $\checkmark$ \\
\midrule
\hline
\multirow{6}{*}{full} & 1 & 3 & 42 & 9.1 & +6.6 & +6.0 & +9.6 & 3.83 & - & - & - & 10.97 & -3.83 & 7.14 & $\times$ & $\checkmark$ \\
 & 1 & 3 & 142 & 8.3 & +8.5 & +7.4 & +11.3 & 3.16 & - & - & - & 10.35 & -3.16 & 7.19 & $\times$ & $\sim$ \\
 & 1 & 6 & 42 & 10.4 & +15.1 & +12.3 & +16.3 & 0.74 & - & - & 2.94 & 16.54 & -3.68 & 12.86 & $\times$ & $\times$ \\
 & 1 & 6 & 42 & 10.4 & +15.1 & +12.3 & +16.3 & 0.64 & 0.02 & 5.58 & 2.58 & 10.52 & 2.34 & 12.86 & $\times$ & $\times$ \\
 & 1 & 6 & 142 & 11.5 & +12.3 & +11.1 & +12.6 & 0.27 & - & - & 1.09 & 12.02 & -1.36 & 10.66 & $\times$ & $\times$ \\
 & 1 & 6 & 142 & 11.5 & +12.3 & +11.1 & +12.6 & 0.36 & 0.72 & 3.47 & 1.44 & 9.71 & 0.95 & 10.66 & $\times$ & $\times$ \\
\midrule
\hline
\multirow{6}{*}{linear} & 1 & 3 & 42 & 9.3 & +7.5 & +6.8 & +8.6 & 0.67 & - & - & - & 7.65 & -0.67 & 6.98 & $\checkmark$ & $\checkmark$ \\
 & 1 & 3 & 142 & 7.8 & +2.8 & +3.4 & +2.4 & 1.10 & - & - & - & 5.08 & -1.10 & 3.98 & $\times$ & $\checkmark$ \\
 & 1 & 6 & 42 & 9.4 & +13.2 & +10.8 & +16.1 & 0.43 & - & - & 1.74 & 13.39 & -2.17 & 11.21 & $\times$ & $\checkmark$ \\
 & 1 & 6 & 42 & 9.4 & +13.2 & +10.8 & +16.1 & 0.44 & 2.29 & 0.82 & 1.75 & 14.87 & -3.66 & 11.21 & $\times$ & $\sim$ \\
 & 1 & 6 & 142 & 11.2 & +8.5 & +8.0 & +10.5 & 0.82 & - & - & 3.28 & 13.18 & -4.10 & 9.09 & $\times$ & $\checkmark$ \\
 & 1 & 6 & 142 & 11.2 & +8.5 & +8.0 & +10.5 & 0.90 & 1.61 & 1.08 & 3.59 & 14.11 & -5.02 & 9.09 & $\times$ & $\sim$ \\
\midrule
\hline
\multirow{6}{*}{pairwise} & 1 & 3 & 42 & 1.6 & +2.8 & +2.6 & +1.7 & 0.00 & - & - & - & 0.54 & 0.00 & 0.54 & $\checkmark$ & $\checkmark$ \\
 & 1 & 3 & 142 & 1.5 & +1.9 & +1.2 & +0.3 & 0.00 & - & - & - & -0.52 & -0.00 & -0.52 & $\checkmark$ & $\checkmark$ \\
 & 1 & 6 & 42 & 1.5 & +2.8 & +2.6 & +1.7 & 0.00 & - & - & 0.00 & 0.61 & 0.00 & 0.61 & $\times$ & $\times$ \\
 & 1 & 6 & 42 & 1.5 & +2.8 & +2.6 & +1.7 & 0.00 & 3.33 & 5.30 & 0.00 & 2.57 & -1.96 & 0.61 & $\times$ & $\checkmark$ \\
 & 1 & 6 & 142 & 1.8 & +0.9 & +0.6 & +0.4 & 0.00 & - & - & 0.00 & -0.72 & -0.00 & -0.72 & $\checkmark$ & $\sim$ \\
 & 1 & 6 & 142 & 1.8 & +0.9 & +0.6 & +0.4 & 0.00 & 1.41 & 3.09 & 0.00 & 0.96 & -1.68 & -0.72 & $\times$ & $\checkmark$ \\
\midrule
\hline
\multirow{6}{*}{ring} & 1 & 3 & 42 & 8.3 & +10.4 & +8.9 & +9.0 & 0.07 & - & - & - & 14.32 & -0.07 & 14.25 & $\times$ & $\sim$ \\
 & 1 & 3 & 142 & 8.0 & +18.9 & +14.7 & +22.1 & 1.36 & - & - & - & 12.83 & -1.36 & 11.48 & $\times$ & $\checkmark$ \\
 & 1 & 6 & 42 & 9.4 & +5.7 & +5.4 & +6.3 & 0.32 & - & - & 1.26 & 15.13 & -1.58 & 13.55 & $\times$ & $\times$ \\
 & 1 & 6 & 42 & 9.4 & +5.7 & +5.4 & +6.3 & 0.38 & 0.11 & 0.22 & 1.54 & 15.36 & -1.81 & 13.55 & $\times$ & $\times$ \\
 & 1 & 6 & 142 & 7.7 & +14.2 & +10.8 & +21.1 & 0.34 & - & - & 1.36 & 14.39 & -1.69 & 12.70 & $\times$ & $\times$ \\
 & 1 & 6 & 142 & 7.7 & +14.2 & +10.8 & +21.1 & 0.36 & 0.02 & 0.03 & 1.43 & 14.48 & -1.78 & 12.70 & $\times$ & $\times$ \\
\bottomrule
\hline
\end{tabular}
}
\caption{\textbf{Comprehensive experimental results for Ionosphere dataset.} Results for 30 configurations across 5 quantum circuit architectures (deep, full, linear, pairwise, ring), each evaluated with $T_0=1$ baseline and $T_1 \in \{3,6\}$ enhanced layers using random seeds \{42, 142\}. Threshold indicates data-driven threshold ($\epsilon_{\text{intra}}$) for statistical significance testing. Performance metrics show improvement ($T_1 - T_0$) in percentage points: positive values indicate gains, negative values indicate degradation. Mean Absolute Mediated Contribution (MAMC) quantifies each quantum mediator's causal influence, with $I(A:B)$ exhibiting strongest mediation across configurations. Mean direct effect: 10.92\%, mean indirect effect: -1.11\%. 80\% of configurations show significant total effects ($|\Delta Y|>1\%$). Assumption A1 (sequential ignorability) satisfied in 20\% of cases. Assumption A2 (no treatment-mediator interaction): fully satisfied (14), partially satisfied (5), not satisfied (11).}
\label{tab:ionosphere_comprehensive}
\end{table}
\clearpage

\end{document}